\definecolor{darkblue}{RGB}{0,0,127} 
\definecolor{darkgreen}{RGB}{0,150,0}
\renewcommand*\env@matrix[1][*\c@MaxMatrixCols c]{%
  \hskip -\arraycolsep
  \let\@ifnextchar\new@ifnextchar
  \array{#1}}
\newtheorem{theorem}{Theorem}[section]
\newtheorem{lemma}{Lemma}[section]
\def\@opargbegintheorem#1#2#3{\trivlist
   \item[]{\bfseries #1\ #2\ (#3)} \itshape}
\newcommand{\Chamon}{Chamon }
\newcommand{\beq}{\begin{eqnarray}}
\newcommand{\eeq}{\end{eqnarray}}
\newcommand{\FF}{{\mathbb{F}}}
\newcommand{\im}{{\mathrm{im}}}
\definecolor{darkblue}{RGB}{0,0,127} 
\definecolor{darkgreen}{RGB}{0,150,0}
\definecolor{ultramarine}{RGB}{0,150,96}
\begin{document}

\title{Emergent fermionic gauge theory and foliated fracton order in the Chamon model}

\author{Wilbur Shirley}
\affiliation{School of Natural Sciences, Institute for Advanced Study, Princeton, NJ 08540, USA}
\affiliation{Department of Physics and Institute for Quantum Information and Matter, \mbox{California Institute of Technology, Pasadena, California 91125, USA}}
\author{Xu Liu}
\thanks{Current Address: Facebook Inc., Menlo Park, CA 94025, USA}
\affiliation{\mbox{Mani L. Bhaumik Institute for Theoretical Physics, Department of Physics and Astronomy}, \mbox{University of California at Los Angeles, Los Angeles, California 90095, USA}}

\author{Arpit Dua}
\affiliation{Department of Physics and Institute for Quantum Information and Matter, \mbox{California Institute of Technology, Pasadena, California 91125, USA}}

\begin{abstract}
    The \Chamon model is an exactly solvable spin Hamiltonian exhibiting nontrivial fracton order. In this work, we dissect two distinct aspects of the model. First, we show that it exhibits an emergent fractonic gauge theory coupled to a fermionic subsystem symmetry-protected topological state under four stacks of $\mathbb{Z}_2$ planar symmetries. Second, we show that the \Chamon model hosts 4-foliated fracton order by describing an entanglement renormalization group transformation that exfoliates four separate stacks of 2D toric codes from the bulk system.
    
\end{abstract}

\maketitle

\section{Introduction}
Gapped quantum systems can form nontrivial phases of matter in the absence of symmetry if they exhibit long-range entanglement in the many-body ground state~\cite{ChenLRE2010}. The traditional examples of long-range entangled phases are those with intrinsic topological order such as fractional quantum Hall states~\cite{Laughlin_1983,Wen_1995} and discrete gauge theories~\cite{dePropitius1999,qdouble}, which are characterized at low energy by topological quantum field theories~\cite{Witten_TQFT}. In 2005, Chamon discovered a three-dimensional exactly solvable lattice model~\cite{Chamon2005quantum} that represents the first example of a new kind of long-range entangled order known as \textit{fractonic order}~\cite{vijay_CB_2015,vijay2016fracton}. 

Quantum phases with fractonic order cannot be described by topological quantum field theory due to an intertwining of universal properties with lattice geometry~\cite{Gorantla2021,vijay_CB_2015,vijay2016fracton,PhysRevB.97.165106}. In particular, fractonic orders are characterized by a ground state degeneracy that scales exponentially with linear system size, and the existence of fractional excitations with constrained mobility~\cite{bravyi2011topological,Haah_code_2011,haah2013_thesis,haah2013commuting,vijay2016fracton}. The Chamon model, for instance, harbors three kinds of quasiparticles: \textit{planons}, which are mobile within a plane, \textit{lineons}, which can move along a line, and \textit{fractons}, which are fundamentally immobile as individual particles~\cite{bravyi2011topological}. In recent years, a wide range of fracton orders have been discovered theoretically, each exhibiting a different manifestation of constrained quasiparticle mobility and subextensive ground state degeneracy~\cite{Castelnovo_Chamon_2011,PhysRevB.81.184303,haah2011local,PhysRevLett.107.150504,kim20123d,yoshida2013exotic,bravyi2013quantum,haah2013_thesis,haah2013commuting,haah2014bifurcation, PhysRevLett.116.027202,vijay_CB_2015,vijay2016fracton,PhysRevB.95.245126,vijay2017isotropic,vijay2017generalization,PhysRevB.96.165106,PhysRevB.96.195139,HHB_models,hsieh_halasz_partons,PhysRevB.97.155111,PhysRevB.97.041110,PhysRevB.97.165106,Bulmash2018,PhysRevB.97.125102,PhysRevB.97.125101,PhysRevB.97.144106,finite_temp_Xcube,you2018symmetric,prem2018cage,Brown2019,Schmitz2019,You2019,You2019b,Weinstein2019,Prem2019,Bulmash2019,hao_twisted,Tantivasadakarn2,Tantivasadakarn_2021_nonabelian}. 
Notable examples include the Haah cubic code~\cite{haah2011local} and the X-cube model~\cite{vijay2016fracton}. It is natural to ask how the variety of fractonic orders can be systematically characterized within a common theoretical framework.

Many fractonic orders have a unified characterization as emergent gauge theories of discrete \textit{subsystem} symmetries, which have either planar or fractal geometry~\cite{vijay2016fracton,Williamson_cubic_code,shirley2018FoliatedFracton,kubica2018ungauging}. For example, the X-cube model is obtained by gauging three orthogonal sets of planar Ising symmetries of a cubic lattice spin-1/2 paramagnet (referred to as a \textit{3-foliated} gauge theory)~\cite{vijay2016fracton}.
The gauging procedure has been extended to fermion parity subsystem symmetries in fermionic systems, whose gauging yields gapped fractonic gauge theories with emergent fermionic charges~\cite{shirley2020fractonic,Tantivasadakarn1,}. On one hand, a large class of fractonic orders, including those belonging to the class of Calderbank-Shor-Steane (CSS) stabilizer codes, can be obtained via this procedure~\cite{kubica2018ungauging}. On the other hand, it remains unclear how, or if, certain fracton models including the \Chamon model, can be obtained by gauging and hence characterized by emergent gauge theory.

In a parallel development, the concept of \textit{foliated fracton order} (FFO) was recently introduced in an effort to systematically characterize fractonic orders with planon excitations~\cite{shirley2017fracton,shirley2018FoliatedFracton}. A lattice model is said to have FFO if the lattice size can be systematically reduced by removing, or \textit{exfoliating}, layers of 2D topological orders from the bulk 3D system via a finite-depth quantum circuit. Such a transformation maps a subset of the bulk planon excitations into anyons of the exfoliated 2D orders. For instance, for the X-cube model, it is possible to exfoliate layers of 2D toric code normal to the three cubic lattice directions, hence the X-cube model is said to have a \textit{3-foliation} structure. The notion of FFO has been shown to apply to a large class of models beyond the X-cube model~\cite{shirley2018CB,Wang2019,Shirley2019}. However, thus far it has remained unknown whether the fractonic order of the \Chamon model is foliated.

The purpose of this paper is to fill the gaps in the fracton literature by presenting two new results on the \Chamon model. First, we show that the model is characterized by a 4-foliated gauge theory coupled to a fermionic subsystem symmetry-protected topological (SSPT) state. In other words, it can be obtained by gauging four sets of planar $\mathbb{Z}_2$ symmetries that protect a non-trivial SSPT state~\cite{you2018subsystem,you2018symmetric} in a fermionic lattice system and then performing a local unitary transformation. This is a surprising result because there is no \textit{a priori} clear division of fractional excitations into gauge charge and gauge flux sectors (as is the case for CSS codes). Instead, it is necessary to expand the unit cell and divide the excitations into charge and flux sectors according to the sublattice on which they reside. This is reminiscent of the gauge theory description of the much simpler 2D Wen plaquette model~\cite{Wen2003}.

Second, we show that the Chamon model exhibits FFO with a 4-foliation structure composed of 2D toric code resource layers. In particular, we describe an entanglement renormalization group transformation~\cite{Vidal2007,haah2014bifurcation,Dua_RG_2019} that maps a copy of the \Chamon model on a $3L\times 3L\times 3L$ cubic lattice to a coarse-grained \Chamon model on an $L\times L\times L$ lattice tensored with four decoupled stacks of 2D toric codes. This 4-foliation structure is consistent with the four orientations of planons in the \Chamon model, and is most easily described in terms of the action of the transformation on the planon excitations. We have also obtained an explicit translation-invariant Clifford circuit realizing this transformation.


The paper is organized as follows. In Sec. \ref{sec:Chamon}, we review the \Chamon model and its essential properties. In Sec. \ref{sec:GaugeTheory}, we explain the characterization of the \Chamon model in terms of emergent fermionic gauge theory. In Sec. \ref{sec:FFO}, we describe the FFO exhibited by the \Chamon model. We conclude with a discussion in Sec. \ref{sec:Discussion}.

\section{The \Chamon model}
\label{sec:Chamon}

\begin{figure*}[tbp]
    \centering
    \includegraphics[width=\textwidth]{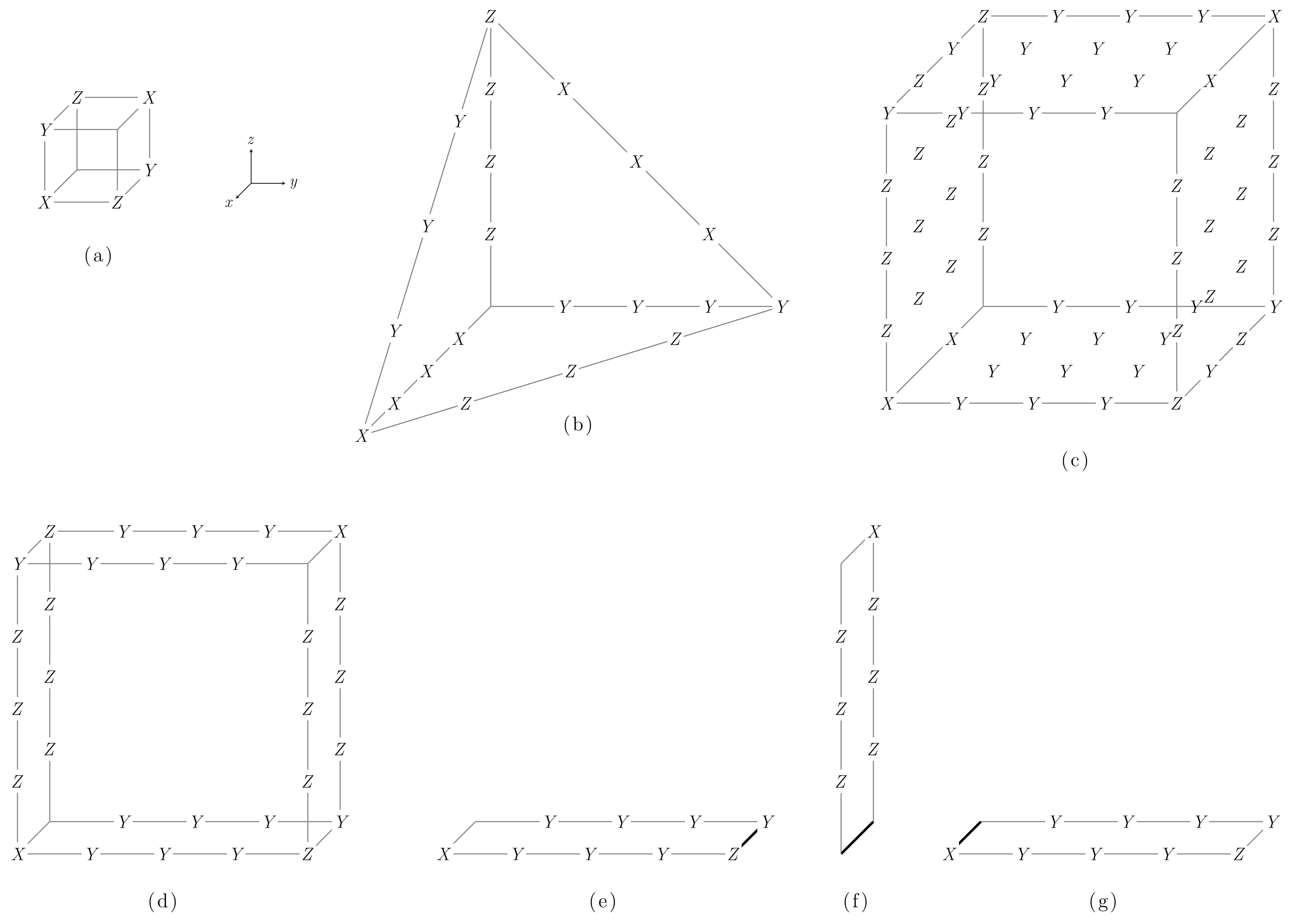}
    \caption{(a) The operator $O_c$, which is a tensor product of the six Pauli operators. (b) A tetrahedral wireframe operator, which is equal to a product of $O_c$ operators inside the tetrahedron. (c) A stabilizer operator bounding a cubic region. (d) A loop operator for an elementary planon of the Chamon model, which is a product of $O_c$ operators within the loop. (e-g) Three planon string operators $W_3$, $W_2$, and $W_1$ forming a T-junction (the bold edge represents the same edge in each subfigure). The fermionic exchange statistic of the elementary planon is given by $W_3W_2^\dagger W_1W_3^\dagger W_2W_1^\dagger =-1$.}
    \label{fig:ChamonOps}
\end{figure*}

The \Chamon model was originally defined on an FCC lattice with one qubit per site~\cite{Chamon2005quantum,bravyi2011topological}, exhibiting the tetrahedral point group symmetry of the lattice. For our purposes it will be more convenient to place the model on a cubic lattice with one qubit per site, by performing an isometry of $\mathbb{R}_3$ defined by
\begin{equation}\begin{split}
    \textstyle\left(0,\frac{1}{2},\frac{1}{2}\right)&\to(1,0,0)\\\textstyle\left(\frac{1}{2},0,\frac{1}{2}\right)&\to(0,1,0)\\\textstyle\left(\frac{1}{2},\frac{1}{2},0\right)&\to(0,0,1).
\end{split}\end{equation}
In this formulation the Hamiltonian has the form
\begin{equation}
    H_C=-\sum_c O_c
\end{equation}
where $c$ indexes the elementary cubes of the lattices and $O_c$ is the six-body Pauli operator depicted in Fig.~\ref{fig:ChamonOps}(a). For any pair of cubes $c,c'$, it holds that $\left[O_c,O_{c'}
\right]=0$, thus $H_C$ is an exactly solvable stabilizer code Hamiltonian~\cite{gottesman1997stabilizer}.
The ground state degeneracy (GSD) of the model on an $L_x\times L_y\times L_z$ periodic cubic lattice has the form
\begin{equation}
    \log_2\text{GSD}=L_x+L_y+L_z+\gcd(L_x,L_y,L_z)-3,
    \label{eq:GSD1}
\end{equation}
The linear component of this formula arises from the following relations between stabilizer generators:
\begin{equation}
    \prod_{c\in P}O_c=1
    \label{eq:relations}
\end{equation}
where $P$ is any dual lattice plane normal to the $x$, $y$ $z$, or $w=(1,1,1)$ directions. This gives a total of $L_x+L_y+L_z+L_w$ relations, where $L_w=\gcd(L_x,L_y,L_z)$ is the number of planes normal to $w$ under periodic boundary conditions. It is straightforward to confirm the constant correction numerically.

The model hosts fractional excitations of all mobility types: fractons, lineons, and planons. The excitation structure can be understood by examining the form of stabilizer operators corresponding to processes of quasiparticle creation, movement, and re-annihilation. These operators are given by certain products of $O_c$ terms, of which there are two types. The first type of operator is a wireframe operator, which is a product of all $O_c$ terms within a polyhedral region bounded by $x$, $y$, $z$, and $w$ planes. Due to the relations in (\ref{eq:relations}), such operators are supported on the edges of the polyhedron, for instance the tetrahedral wireframe operator pictured in Fig.~\ref{fig:ChamonOps}(b). Lineon excitations are created at the endpoints of truncated wireframe operators. There are six kinds of lineons, with mobility in the $x$, $y$, $z$, $(0,1,-1,)$, $(-1,0,1)$, and $(1,-1,0)$ directions, respectively. The lineons obey triple fusion rules in which three distinctly oriented lineons fuse together into the vacuum, which is possible when their respective string operators form the corner of a wireframe operator. For example, both the $x$, $y$, $z$ lineon triple and the $(1,-1,0)$, $(0,1,-1)$, $z$ lineon triple fuse into the vacuum, whereas $x$, $y$, $(1,-1,0)$ and $x$, $y$, $(-1,0,1)$ triples do not.

The second kind of operator is slightly more complicated: it is given by the product of all $O_c$ terms lying in \textit{even} dual lattice planes normal to a particular direction $\mu=x,y,z,w$ within a polyhedral region bounded by $x$, $y$, $z$, and $w$ planes. Such operators are supported on all surfaces of the polyhedron except those normal to $\mu$. An example of such an operator bounding a cubic region is shown in Fig.~\ref{fig:ChamonOps}(c). Truncating to a single face gives a membrane operator which creates fracton excitations at its corners.

There are four types of planons mobile within planes normal to the $x$, $y$, $z$, and $w$ directions respectively. For each direction, there is one independent species of planon per lattice spacing, referred to as an elementary planon. The loop operator for an elementary planon can be obtained by taking the product of all $O_c$ operators in a large region within a single $x$, $y$, $z$, or $w$ plane, for instance as depicted in Fig.~\ref{fig:ChamonOps}(d). These elementary planons can be viewed as either lineon dipoles or as fracton dipoles, \textit{i.e.} composite excitations of a pair of adjacent lineons or fractons, since the loop operator can be viewed as either the first or second kind of operator described in the previous paragraph. Hence, any lineon or fracton can be regarded as a semi-infinite stack of planons, so the exchange and braiding statistics of fractional excitations in the model are entirely characterized by the planon statistics. There are two important features: first, each of the elementary planons have fermionic exchange statistics. Second, adjacent parallel planons have a mutual $\pi$ braiding statistic. These facts can be verified by examining the structure of the planon string operators as shown in Fig.~\ref{fig:ChamonOps}(e-g). Since the elementary planons can be regarded as lineon dipoles, this also implies that intersecting lineons have a mutual $\pi$ braiding statistic.

\section{Emergent fermionic gauge theory}
\label{sec:GaugeTheory}

In this section we demonstrate that the Chamon model is equivalent under a generalized local unitary transformation~\cite{ChenLRE2010} to a fractonic gauge theory coupled to a fermionic subsystem symmetry-protected topological (SSPT) state~\cite{you2018subsystem,you2018symmetric}. We begin with the SSPT matter Hamiltonian $H_M$, which is symmetric under four stacks of $\mathbb{Z}_2$ planar symmetries. We then gauge the symmetry to obtain a spin model $H_G$. Finally, we transform $H_G$ into the Chamon model $H_C$ via a generalized local unitary.

We also sketch an argument that $H_M$ is a weak SSPT in the sense of Refs.~\cite{subsystemphaserel,Devakul_2020}.

\subsection{Matter Hamiltonian}

First we describe the matter Hamiltonian $H_M$. We consider a cellulation of $\mathbb{R}_3$ obtained by slicing along lattice planes of integer spacing normal to the $x$, $y$, $z$, and $w=(1,1,1)$ directions. The $x$, $y$, and $z$ planes divide $\mathbb{R}_3$ into unit volume elementary cubes, and each cube is further sliced into three 3-cells by the $w$ planes: two types of tetrahedra and one octahedron, as pictured in Fig. \ref{fig:TetraOcta}. The Hilbert space of $H_M$ is composed of one fermionic orbital per tetrahedron and one qubit per octahedron. The Hamiltonian has the form
\begin{align}
    H_M&=-\sum_ti\gamma_t\gamma'_t-\sum_o\mathcal{X}_o
\end{align}
where $t$ indexes tetrahedra and $o$ octahedra, and
\begin{equation}
    \mathcal{X}_o\equiv X_o\prod_{a=0}^1\prod_{b=0}^1\prod_{c=0}^1\prod_{d=0}^1 Z_{o+a\hat{y}-b\hat{z}+c(1,-1,0)+d(-1,0,1)}
    \label{eq:SSPT}
\end{equation}
where $o+\Vec{r}$ represents the octahedron displaced from $o$ by $\Vec{r}$ (see Fig.~\ref{fig:SSPT}(a)). The terms of $H_M$ mutually commute, hence the model is exactly solvable.

\begin{figure}[tbp]
    \centering
    \includegraphics[width=.4\textwidth]{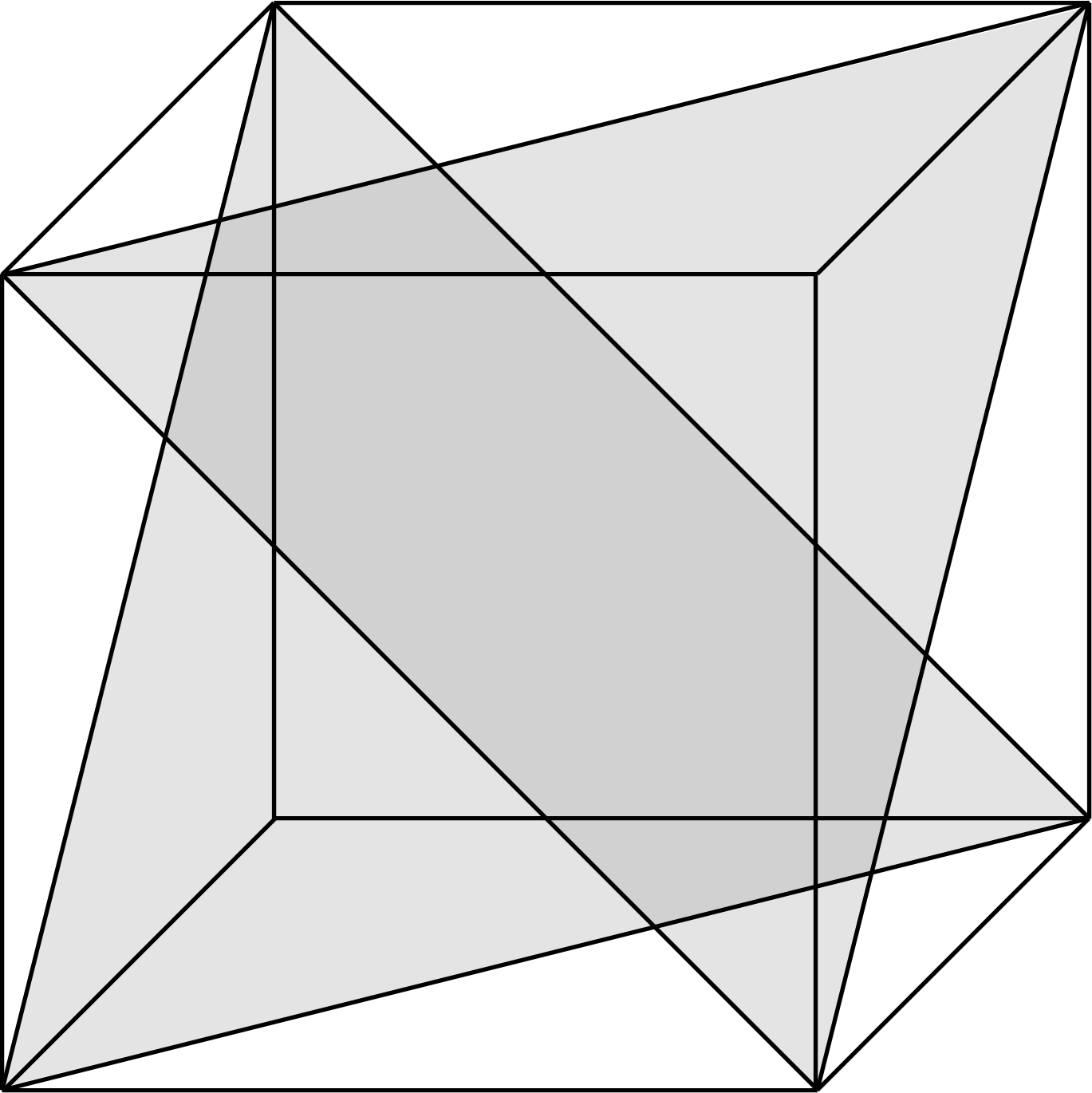}
    \caption{The tetrahedral-octahedral honeycomb. Each cube of a cubic lattice is split into two tetrahedra and one octahedron by $(1,1,1)$ planes (shaded).}
    \label{fig:TetraOcta}
\end{figure}

$H_M$ is symmetric under four stacks of unitary $\mathbb{Z}_2$ planar subsystem symmetries, normal to the $x$, $y$, $z$, and $w$ directions. Each symmetry generator is associated with a \textit{dual} lattice plane of the tetrahedral-octahedral honeycomb. Let $P$ denote the set of all 3-cells lying in a dual lattice plane. Then the corresponding symmetry of $H_M$ is
\begin{equation}
    S_P=\prod_{t\in P}i\gamma_t\gamma'_t\prod_{o\in P}X_o.
\end{equation}
There is one symmetry generator for every such $P$. To see that the $\mathcal{X}_o$ terms commute with all of these symmetries, note that each of the $x$, $y$, $z$, and $w$ planes contains at least one of the $\hat{y}$, $\hat{z}$, $(1,-1,0)$, or $(-1,0,1)$ vectors.

We note that the subsystem symmetries obey the global relations
\begin{equation}
    \prod_{P_x}{S}_{P_x}=\prod_{P_y}{S}_{P_y}=\prod_{P_z}{S}_{P_z}=\prod_{P_w}{S}_{P_w}
\end{equation}
where the products are over all dual lattice planes $P_\mu$ normal to $\mu$. Importantly, we also note that the product of symmetries over all \textit{even} dual lattice planes in all four directions is equal to the \textit{global} fermion parity $\mathbb{Z}_2^F$, which is thus generated by the subsystem symmetry group. Therefore, a bosonic system will be obtained upon gauging the symmetries.

\subsection{Gauging}
\label{sec:Gauging}

We now discuss the gauging of symmetries according to the general prescription~\cite{Levin_2012,shirley2018FoliatedFracton,shirley2020fractonic,Tantivasadakarn1}. The first step is to identify a set of `minimal couplings' that generate the algebra of symmetric operators together with the on-site symmetry representations (Pauli $X$ on qubits and $i\gamma\gamma'$ on fermion orbitals). There is one minimal coupling for each edge $e$ of the tetrahedral-octahedral honeycomb, acting on the degrees of freedom associated with the four 3-cells adjacent to $e$ (two octahedra $o$ and $o'$ and two oppositely oriented tetrahedra $t$ and $t'$), which we choose to be
\begin{equation}
    M_e\equiv Z_{o}Z_{o'}\gamma_{t}\gamma_{t'}.
    \label{eq:couplingfermionic}
\end{equation}

\begin{figure}[tbp]
    \centering
    \includegraphics[width=.95\textwidth]{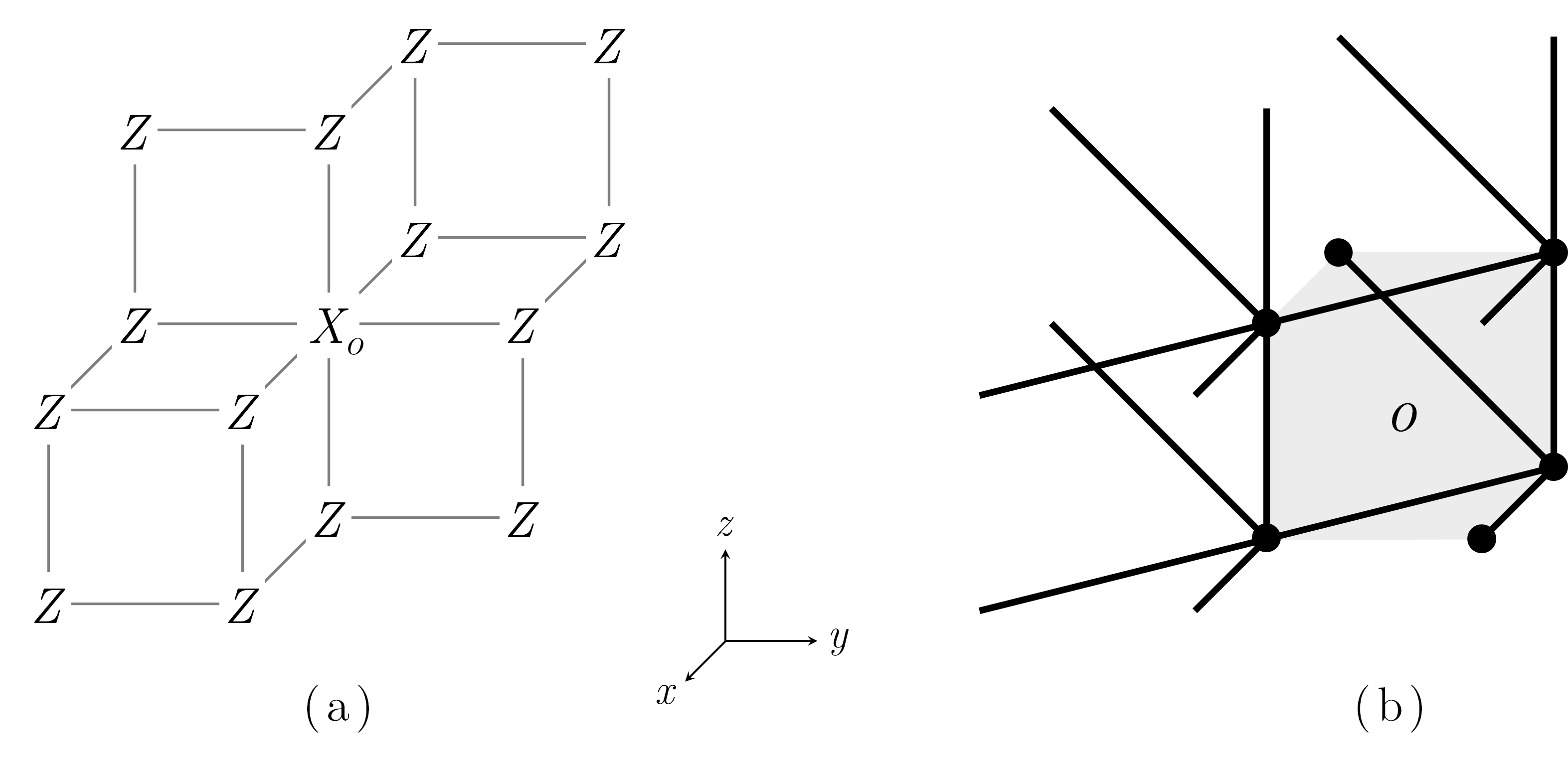}
    \caption{(a) Depiction of the operator $\mathcal{X}_o$. Each Pauli operator acts on an octahedral qubit, whose center-points form a cubic lattice. The octahedron $o$ is indicated by subscript. (b) The set of edges $E_o$ with respect to the octahedron $o$, whose vertices are the six dots.}
    \label{fig:SSPT}
\end{figure}

The second step is to introduce a gauge qubit degree of freedom for each minimal coupling, hence one per edge. We simultaneously restrict the Hilbert space by introducing generalized Gauss's law constraints for each matter degree of freedom. The constraints have the form
\begin{equation}
    X_o\prod_{e\in o}X_e=1,\qquad
    i\gamma_t\gamma'_t\prod_{e\in t}X_e=1
    \label{eq:constraintfermionic}
\end{equation}
for each octahedron $o$ and tetrahedron $t$.

The third step is to couple the gauge and matter degrees of freedom by introducing a gauged Hamiltonian that preserves the constraints. In particular, in the gauged Hamiltonian, the minimal coupling for each edge $e$ is composed with the gauge qubit operator $Z_e$:
\begin{equation}
    M_e\to M_eZ_e.
\end{equation}
This modification is non-unique, since there are multiple ways to express the operator $\mathcal{X}_o$ in terms of the minimal couplings. We choose the expression
\begin{equation}
    \mathcal{X}_o=X_o\prod_{e\in E_o}M_e
\end{equation}
where $E_o$ is the set of edges depicted in Fig.~\ref{fig:SSPT}(b). Hence
\begin{equation}
    \mathcal{X}_o\to X_o\prod_{e\in E_o}M_eZ_e.
\end{equation}

The final step is to add a set of terms $B_{v,\mu}$ for each vertex $v$ to the gauged Hamiltonian in order to gap out the gauge flux excitations. Here $\mu=x,y,z,w$ and $B_{v,\mu}$ is defined as the tensor product of Pauli $Z$ operators over the six links adjacent to $v$ in the plane normal to $\mu$. Thus, the gauged Hamiltonian takes the form
\begin{equation}
    \tilde{H}_M=-\sum_ti\gamma_t\gamma'_t-\sum_oX_o\prod_{e\in E_o}M_eZ_e-\sum_{v,\mu}B_{v,\mu},
\end{equation}
subject to the constraints (\ref{eq:constraintfermionic}).

The matter degrees of freedom can be eliminated via the unitary
\begin{equation}\begin{split}
    X_o\to X_o\prod_{e\in o}X_e\qquad Z_o\to Z_o\\
    \gamma_t\to\gamma_t\prod_{e\in t}X_e\qquad\gamma_t'\to\gamma_t'\\
    X_e\to X_e\qquad Z_e\to M_e\overline{Z}_e,
    \label{eq:unitary}
\end{split}\end{equation}
which maps the constraints of (\ref{eq:constraintfermionic}) to $X_o=1$ and $i\gamma_t\gamma_t'=1$ respectively. The $\overline{Z}_e$ operators are defined in Fig.~\ref{fig:big} such that $\overline{Z}_e$ and $\overline{Z}_{e'}$ anticommute if $e$ and $e'$ belong to the same tetrahedron, and commute otherwise. In the constrained space, $\tilde{H}_M$ is mapped to a bosonic Hamiltonian $H_G$ acting on the pure gauge qubit Hilbert space:
\begin{equation}
    H_G=-\sum_c\overline{A}_c-\sum_{v,\mu}\overline{B}_{v,\mu},
\end{equation}
where
\begin{equation}
    \overline{A}_t\equiv\prod_{e\in c}X_e,\qquad
    \overline{A}_o\equiv\prod_{e\in o}X_e\prod_{e\in E_o}\overline{Z}_e,
\end{equation}
and $\overline{B}_{v,\mu}$ is the image of $B_{v,\mu}$ under the unitary (\ref{eq:unitary}). The terms of $H_G$ mutually commute, hence they define a Pauli stabilizer code.

\begin{figure*}[tbp]
    \centering
    \includegraphics[width=\textwidth]{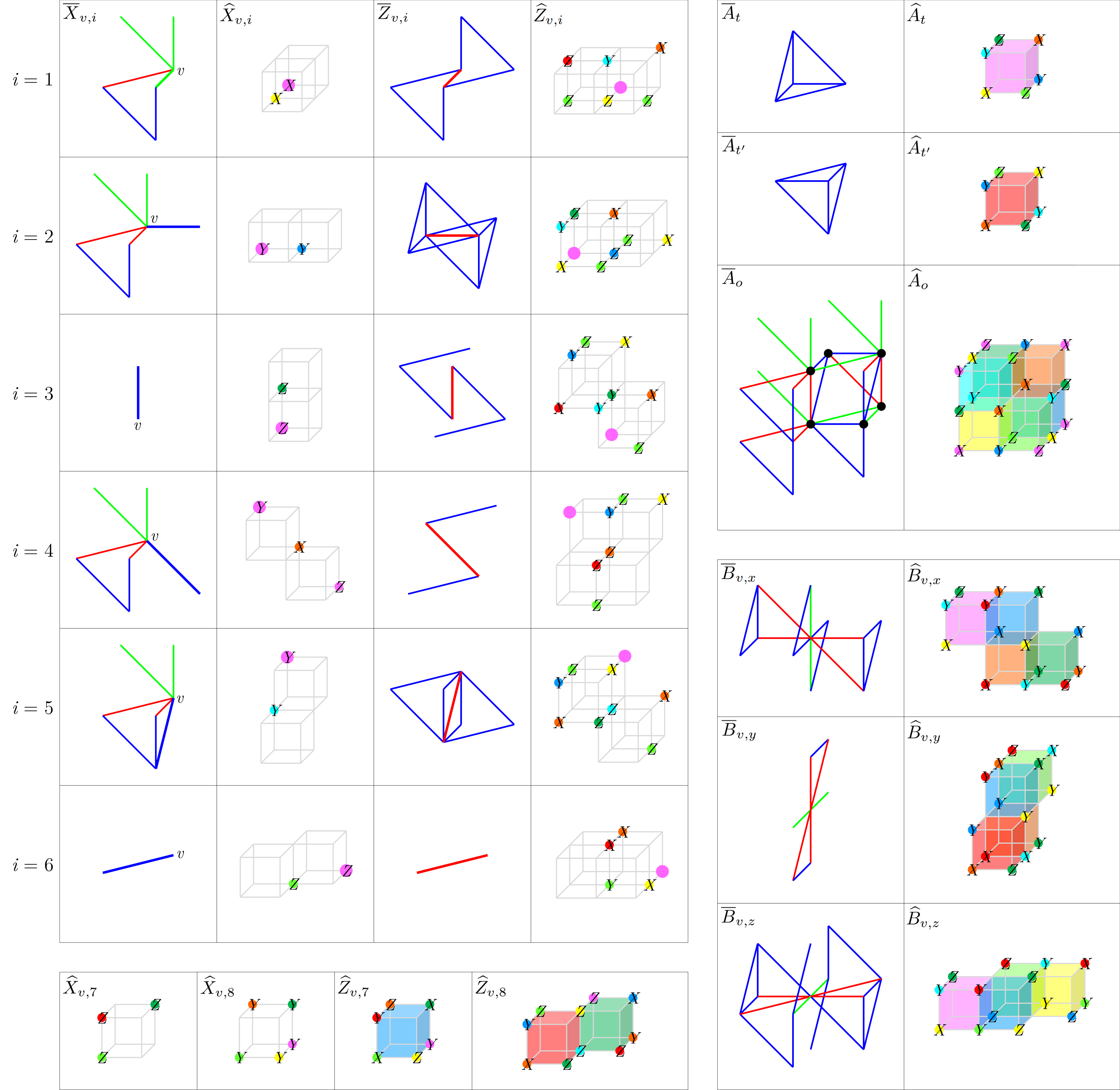}
    \caption{Definitions of the Pauli operators introduced in this section. The operators $\overline{X}_{v,i}$ and $\overline{Z}_{v,i}$ acting on $\mathcal{H}_G$ for $i=1,\ldots,6$ are defined in the table on the left, which are equivalent to the $\overline{X}_e$ and $\overline{Z}_e$ operators for the bold edge $e$. Red, green, and blue edges respectively represent the action of Pauli $Z$, $Y$, and $X$. The operators $\widehat{X}_{v,i}$ and $\widehat{Z}_{v,i}$ acting on $\mathcal{H}_C$ for $i=1,\ldots,8$ are defined in the tables on the left, with $v$ given by the enlarged magenta dot in each figure (an unlabelled enlarged dot has no Pauli action). The 3-cell operators $\overline{A}_t$, $\overline{A}_{t'}$, and $\overline{A}_o$, and vertex operators $\overline{B}_{v,\mu}$ of $H_G$ are defined in the tables on the right. The vertices of octahedron $o$ are indicated by black dots, whereas the vertex $v$ for each $\overline{B}_{v,\mu}$ operator is the central vertex. The operators $\widehat{A}_t$, $\widehat{A}_{t'}$, $\widehat{A}_o$, $\widehat{B}_{v,x}$, $\widehat{B}_{v,y}$, and $\widehat{B}_{v,z}$ acting on $\mathcal{H}_C$ are likewise defined in the tables on the right. These operators, together with $\widehat{Z}_{v,7}$ and $\widehat{Z}_{v,8}$, generate the stabilizer group of $H_C$. The shaded cubes indicate that a given operators is equal to a product of the corresponding cube terms of $H_C$ (the color of each cube corresponds to the vertex of minimum $x$, $y$, and $z$ coordinates).}
    \label{fig:big}
\end{figure*}

\subsection{Excitation content and ground state degeneracy of the gauged Hamiltonian}

To analyze the properties of $H_G$, it is helpful to express the Hamiltonian in terms of operators $\overline{X}_e$ and $\overline{Z}_e$ associated with edge $e$ of the tetrahedral-octahedral honeycomb. These operators are defined in Fig.\ref{fig:big}. We have already used the $\overline{Z}_e$ operators in the unitary (\ref{eq:unitary}). In particular,
\begin{equation}
    \overline{A}_c=\prod_{e\in c}\overline{X}_e,\qquad \overline{B}_{v,\mu}=\prod_{v\ni e\perp\mu}\overline{Z}_e
    \label{eq:overline}
\end{equation}
where the second product is over the six edges $e$ adjacent to $v$ in the plane normal to $\mu$. These operators are defined in Fig.~\ref{fig:big} and satisfy the relations
\begin{equation}
    \overline{X}_e^2=\overline{Z}_e^2=1,\qquad
    \Big\{\overline{X}_e,\overline{Z}_e\Big\}=\Big[\overline{X}_e,\overline{Z}_{e'}\Big]=0
    \label{eq:relationsHG}
\end{equation}
where $e$ and $e'$ are distinct edges. On the other hand, if $e$ and $e'$ are nearby, then it is generically the case that
\begin{equation}
    \Big[\overline{X}_e,\overline{X}_{e'}\Big]\neq0,\qquad\Big[\overline{Z}_e,\overline{Z}_{e'}\Big]\neq0.
    \label{eq:relationsHG2}
\end{equation}
It is instructive to note that due to (\ref{eq:overline}), there is a formal relation between $H_G$ and a certain 4-foliated version of the X-cube model, $H_{4XC}$, described in Appendix~\ref{app:4XC}. Roughly speaking, $H_G$ is obtained from $H_{4XC}$ by replacing $X_e\to\overline{X}_e$ and $Z_e\to\overline{Z}_e$.

$H_G$ has six qubits and six stabilizer generators per unit cell (since one of the four $\overline{B}_{v,\mu}$ terms is redundant). The stabilizer generators obey the following relations:
\begin{equation}
    \prod_{c\in P}\overline{A}_c=1,\qquad\prod_{v\in P'}\overline{B}_{v,\mu}=1,
\end{equation}
where $c\in P$ indexes all 3-cells in a dual lattice plane $P$, and $v\in P'$ indexes all vertices belonging to a direct lattice plane $P$. However, three of these relations are redundant, hence the ground state degeneracy (GSD) of $H_G$ on an $L_x\times L_y\times L_z$ lattice with periodic boundary conditions satisfies
\begin{equation}
    \log_2\text{GSD}=2L_x+2L_y+2L_z+2\gcd(L_x,L_y,L_z)-3.
    \label{eq:GSD2}
\end{equation}

\begin{figure*}[tbp]
    \centering
    \includegraphics[width=.7\textwidth]{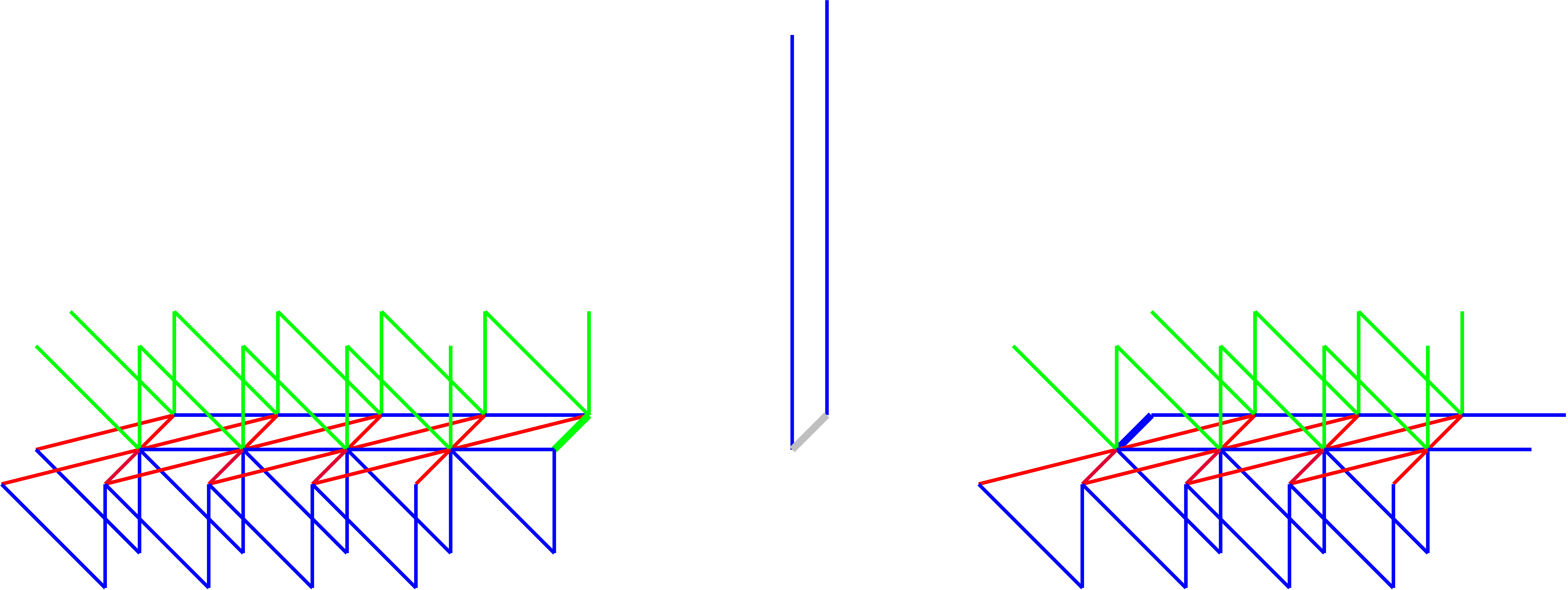}
    \caption{Three fracton dipole string operators $W_3$, $W_2$, and $W_1$ forming a T-junction (the bold edge represents the same edge in each subfigure). The fermionic exchange statistic of the fracton dipole is given by $W_3W_2^\dagger W_1W_3^\dagger W_2W_1^\dagger =-1$. The red, green and blue edges represent Pauli operators $Z$, $Y$ and $X$. The bold gray edge has no Pauli action.}
    \label{fig:exchangeHG}
\end{figure*}

The fractional excitations of $H_G$ can be split into two sectors, which we refer to as electric charges and magnetic fluxes.
The magnetic sector consists of lineons created at the endpoints of rigid string operators, which are finite segments of \textit{wireframe} operators equal to the product of all $\overline{A}_c$ terms within a polyhedral region bounded by $x$, $y$, $z$, and $w$ planes. Rigid string operators are equal to the product of $\overline{X}_e$ operators over all edges of the string, which follows from the first expression of (\ref{eq:overline}). There are six species of lineons, corresponding to the six orientations of edges in the tetrahedral-octahedral honeycomb: $x$, $y$, $z$, $(1,-1,0)$, $(0,1,-1)$, and $(-1,0,1)$. Triples of lineons meeting at a single vertex fuse into the vacuum if their string operators belong to the corner of a wireframe operator. For example, $x$, $y$, $z$, and $(1,-1,0)$, $(0,1,-1)$, $z$ lineon triples fuse into the vacuum, whereas $x$, $y$, $(1,-1,0)$ and $x$, $y$, $(-1,0,1)$ triples do not. Due to these triple fusion rules, composite excitations of two adjacent parallel lineons, i.e. lineon dipoles, are planons. There are four species of lineon dipoles in the model: those mobile in planes normal to the $x$, $y$, $z$, or $w$ directions. The loop operators for lineon dipoles are wireframe operators with a slab geometry.

The electric sector consists of fractons created at the corners of dual lattice membrane operators composed of a product of $\overline{Z}_e$ operators over all dual lattice faces comprising the membrane (each dual lattice face corresponds to a direct lattice edge $e$). Each fracton excitation is associated with a 3-cell of the tetrahedral-octahedral honeycomb. Fracton dipoles composed of a tetrahedral fracton and an adjacent octahedral fracton, are planons. There are four species of fracton dipoles in the model: those mobile in planes normal to the $x$, $y$, $z$, or $w$ directions.

The charge and flux sectors of $H_G$ interact via generalized long-range Aharanov-Bohm statistical interactions. In particular, a phase of $-1$ is obtained when a lineon dipole flux encircles a fractonic charge, and likewise when a fracton dipole charge encircles a lineonic flux. These interactions arise from the commutation relations of (\ref{eq:relationsHG}).

There are also nontrivial statistical interactions within both the electric and magnetic sectors, due to the nontrivial commutation relations of (\ref{eq:relationsHG2}). In the electric sector, the tetrahedral fractons are fermionic, whereas the octahedral fractons are bosonic. Therefore, each of the fracton dipoles is a fermion. This self-exchange statistic can be explicitly computed using the formula $\theta=W_3W_2^\dagger W_1W_3^\dagger W_2W_1^\dagger$ where $W_i$ are three fracton dipole string operators with a common endpoint~\cite{kitaev2006anyons,Levin_wen_fermion}, as in Fig.~\ref{fig:exchangeHG}.

\begin{figure}[tbp]
    \centering
    \includegraphics[width=.9\textwidth]{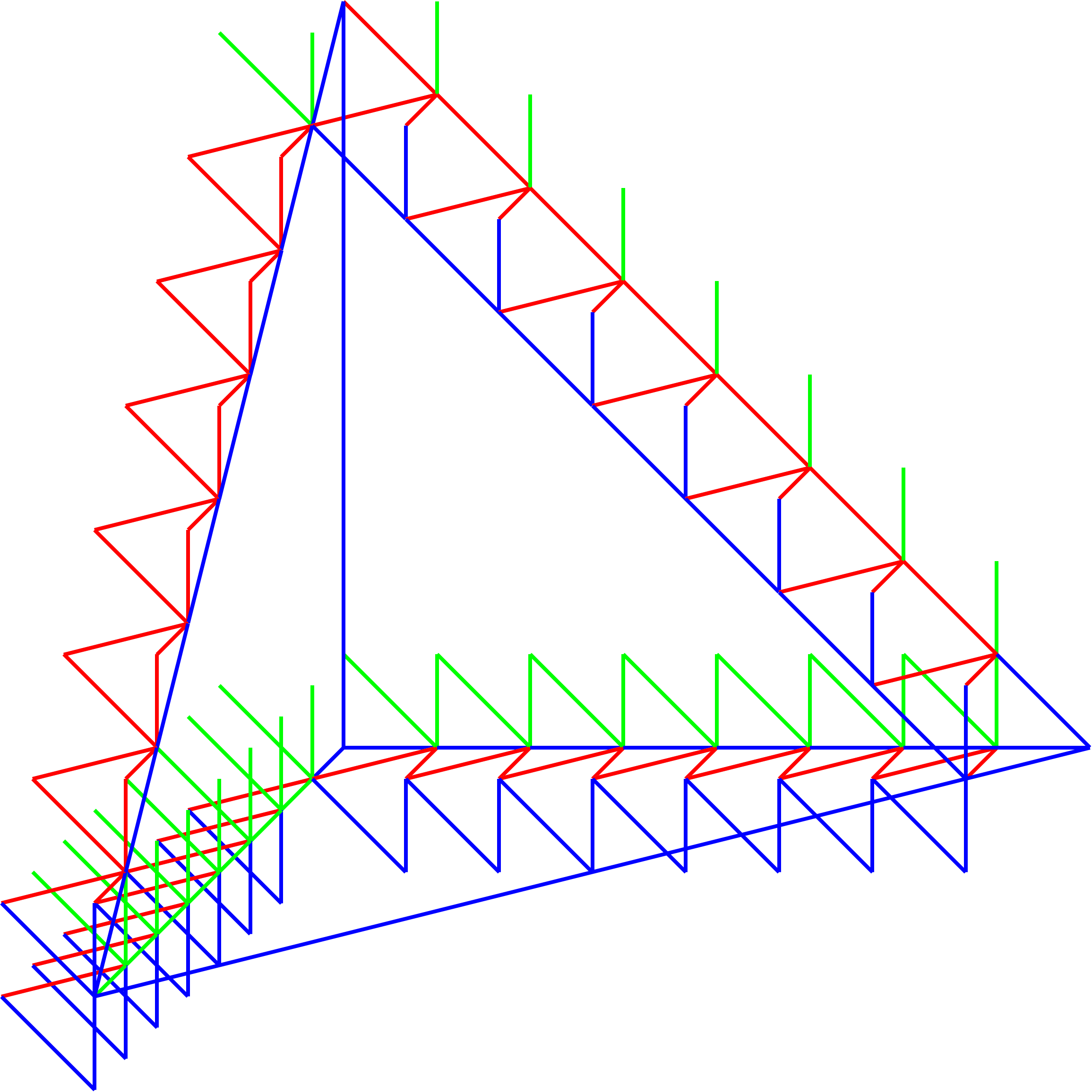}
    \caption{A tetrahedral wireframe operator for $H_G$, given by a product of $\overline{A}_c$ terms over 3-cells inside the tetrahedron. The red, green and blue edges represent Pauli operators $Z$, $Y$ and $X$.}
    \label{fig:wireframeHG}
\end{figure}

In the magnetic sector, the lineons exhibit nontrivial exchange statistics and nontrivial braiding statistics with other lineons. In particular, any pair of lineons intersecting in an $x$, $y$, $z$, or $w$ plane has a mutual $\pi$ braiding statistic, arising from the anticommutation of intersecting lineon string operators. This can be observed from the form of the wireframe operators, an example of which is shown in Fig.~\ref{fig:wireframeHG}. As a result, lineon dipoles in adjacent planes likewise have a $\pi$ braiding statistic. Moreover, each lineon dipole is a fermion.

\subsection{Mapping to the \Chamon model}
\label{sec:gLU}

We now describe a generalized local unitary (gLU) transformation that maps the ground space of $H_G$ to that of the \Chamon model $H_C$. Based on the expressions (\ref{eq:GSD1}) and (\ref{eq:GSD2}) for the ground state degeneracy of these models, it is clear that for this transformation to work, a unit cell of $H_G$ must correspond to a $2\times2\times2$ cell of $H_C$. Therefore, in this section, we place the \Chamon model qubits on the sites of a cubic lattice with \textit{half-integer} coordinates.
With respect to the integer cubic lattice, the \Chamon model has eight qubits and eight stabilizer generators per unit cell, forming a Hilbert space $\mathcal{H}_C$ as shown in Fig.~\ref{fig:ChamonUnitCell}. We label the qubits with a double subscript $v,i$ with $i=1,\ldots,8$ and $v$ the vertex of the integer lattice coinciding with qubit $1$.

\begin{figure}[tbp]
    \centering
    \includegraphics[width=.5\textwidth]{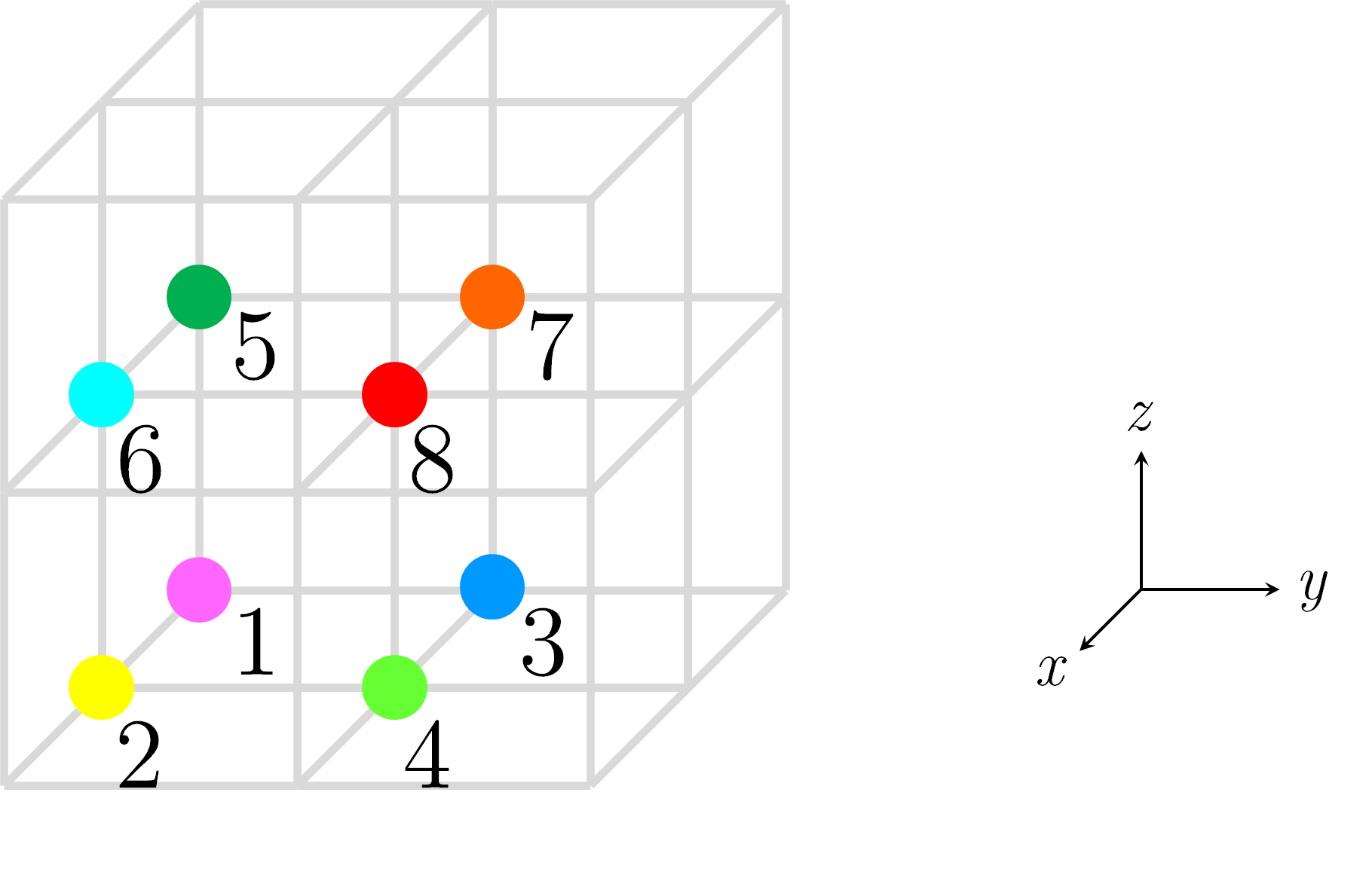}
    \caption{A $2\times2\times2$ cell of the Chamon model, regarded as a unit cell in the transformation between $H_C$ and $H_G$. There are eight qubits in the unit cell, each represented by a dot of a distinct color.}
    \label{fig:ChamonUnitCell}
\end{figure}

On the other hand, the gauged model $H_G$ has only six qubits per unit cell (one per edge of the tetrahedral-octahedral honeycomb). To match the degrees of freedom, we add two ancillary qubits per unit cell to the Hilbert space of $H_G$, forming a Hilbert space $\mathcal{H}_G$ which has eight qubits per unit cell and can thus be identified with $\mathcal{H}_C$. Each of the eight qubits is likewise labelled with a double subscript $v,i$ with $i=1,\ldots,8$. Qubits 1 through 6 are those associated with the edges emanating from $v$ in the $x$, $y$, $z$, $(0,1,-1)$, $(-1,0,1)$, and $(1,-1,0)$ directions, respectively, and 7 and 8 are the two ancillary qubits. We also add two additional terms $\overline{Z}_{v,7}\equiv Z_{v,7}$ and $\overline{Z}_{v,8}\equiv Z_{v,8}$ for each vertex $v$ to $H_G$, defining an augmented Hamiltonian $H'_G$.

To facilitate the transformation, in Fig.~\ref{fig:big} we define operators $\widehat{X}_{v,i}$ and $\widehat{Z}_{v,i}$ on $\mathcal{H}_C$ that obey relations identical to $\overline{X}_{v,i}$ and $\overline{Z}_{v,i}$ for $i=1,\ldots,8$:
\begin{equation}
\begin{split}
    [[\widehat{X}_{v,i},\widehat{X}_{v',j}]]&=[[\overline{X}_{v,i},\overline{X}_{v',j}]],\\
    \widehat{X}_{v,i}^2=\widehat{Z}_{v,i}^2=1,\qquad
    [[\widehat{X}_{v,i},\widehat{Z}_{v',j}]]&=[[\overline{X}_{v,i},\overline{Z}_{v',j}]],\\
    [[\widehat{Z}_{v,i},\widehat{Z}_{v',j}]]&=[[\overline{Z}_{v,i},\overline{Z}_{v',j}]].
\end{split}
\end{equation}
where $[[A,B]]\equiv A^{-1}B^{-1}AB$. (Each of these group commutators is a $\pm1$ phase). Due to these relations, and the fact that $\overline{Z}_{v,i}$ and $\overline{X}_{v,i}$ generate the operator algebra of $\mathcal{H}_G$, it follows that there exists an operator algebra automorphism $V$ mapping
\begin{equation}
     \overline{X}_{v,i}\to \widehat{X}_{v,i},\qquad \overline{Z}_{v,i}\to \widehat{Z}_{v,i}.
\end{equation}
Moreover, as shown in Fig.~\ref{fig:big}, $V$ maps the terms of $H_G'$ to a set of stabilizers
\begin{equation}
    \left\{\widehat{A}_t,\widehat{A}_{t'},\widehat{A}_o,\widehat{B}_{v,x},\widehat{B}_{v,y},\widehat{B}_{v,z},\widehat{Z}_{v,7},\widehat{Z}_{v,8}\right\}
\end{equation}
that generates the stabilizer group of $H_C$. Therefore,
\begin{equation}
    VH_G'V^\dagger\sim H_C
\end{equation}
where $\sim$ denotes equality of ground spaces. In the supplementary Mathematica file we demonstrate that $V$ is in fact a finite-depth Clifford circuit.
Thus, we have arrived at the first main result of the paper: the \Chamon model $H_C$ is generalized local unitary equivalent to the gauged Hamiltonian $H_G$. Appendix~\ref{app:Polynomial} provides an alternative description of this transformation in terms of the polynomial description of translation invariant Pauli stabilizer codes.

To better understand this equivalence, we consider how the transformation acts on the fractional excitation superselection sectors. First, we note that the wireframe operators of $H_G$ are mapped by $V$ into wireframe operators (with even-length edges) of the Chamon model $H_C$. Therefore, the lineons of $H_G$ become the lineons of $H_C$ under the transformation. This is consistent with the fact that both models exhibit a mutual $\pi$ braiding statistic between intersecting lineons sharing an $x$, $y$, $z$, or $w$ plane. Second, we note that the loop operators for fracton dipoles of $H_G$ are transformed into loop operators for the elementary planons of the Chamon model lying in even dual lattice planes. In other words, adjacent fracton dipoles are mapped into pairs of elementary planons of $H_C$ separated by two lattice spacings. This is consistent with the fact that the fracton dipoles of $H_G$ have fermionic exchange statistics but trivial mutual braiding statistics, as the elementary planons in the Chamon model are fermions that braid non-trivially with their nearest neighbors only.

\subsection{Weak SSPT}

In this section, based on the excitation content of $H_G$, we argue that the matter Hamiltonian $H_M$ represents a weak subsystem symmetry-protected topological (SSPT) state. A weak SSPT is defined as one that can be obtained by stacking 2D SPTs onto a trivial state in such a way that all planar symmetries are preserved~\cite{subsystemphaserel,Devakul_2020}. In the presence of fermionic degrees of freedom, this definition can be extended to allow for stacking of non-invertible 2D topological states. In particular, we consider starting with a completely trivial state (Ising paramagnet plus atomic insulator) on the matter Hilbert space of $H_M$. We then stack alternating layers of invertible topological orders corresponding to the $\nu=4$ and $\nu=-4$ states of the Kitaev 16-fold way~\cite{kitaev2006anyons} onto each plane of the tetrahedral-octahedral honeycomb. Finally, each of the $S_P$ symmetry generators is modified such that it is the product of the original $S_P$ with the total fermion parities of the two Kitaev states adjacent to $P$. It is easy to see that this modification preserves all the relations of the symmetry group. We conjecture that this state belongs to the same universality class as the model $H_M$.

To see why this is reasonable, it is helpful to consider the same construction on the gauged level, which should yield a model gLU-equivalent to the Chamon model. In the gauged system, the stacking of Kitaev states is equivalent to stacking alternating layers of fermion parity-gauged $\nu=4$ and $\nu=-4$ states, \textit{i.e.} semion-fermion and anti-semion-fermion topological orders, onto an untwisted fermionic gauge theory (equivalent to the model described by polynomial matrix $T\Sigma$ of Appendix~\ref{app:4XC}). After stacking, bound states of the emergent fermion and the fracton dipole living in the same plane are condensed, confining all of the original lineons in the model but leaving deconfined bound states formed out of a lineon fused with the a semion (or anti-semion) in each of the two parallel planes. This step is equivalent to modifying the symmetry generators $S_P$ on the ungauged level. It is clear that this procedure results in the correct braiding statistics of gauge flux planons, \textit{i.e.} a mutual semionic statistic between adjacent lineon dipoles. Each of these bound-state lineons can be mapped to a (possibly dyonic) lineon of the Chamon model, therefore the condensed model has the same fractional excitation content as the Chamon model.

\section{Foliated fracton order}
\label{sec:FFO}

A model is said to have \textit{foliated fracton order} (FFO) if its system size can be systematically reduced by disentangling, or \textit{exfoliating}, layers of 2D topological orders from the bulk system via generalized local unitary (gLU) transformation~\cite{shirley2017fracton}. If there are $n$ different orientations of such 2D states, the model is said to have an $n$-foliation structure. The first known example of FFO was the X-cube model, which has a 3-foliation structure, followed by a handful of other examples including 1-, 2-, and 3-foliated models~\cite{shirley2018CB,Wang2019,shirley2018Fractional,Shirley2019}.

In this section we demonstrate that the \Chamon model hosts 4-foliated fracton order, with foliation layers normal to the $x$, $y$, $z$, and $w=(1,1,1)$ directions. In particular, we show that the system size can be decreased by a constant factor $m$ by exfoliating stacks of 2D toric codes~\cite{qdouble} in four directions from the bulk system, where $m$ is any odd integer. This result is consistent with previous studies on entanglement signatures~\cite{Dua_Classification_2019} and compactification~\cite{Dua2019_compactify} of the model.

$H_C$ is defined on a cubic lattice, which we will take to have integer coordinates in this section and refer to as $\Lambda$. The combination of Hamiltonian and its underlying lattice is denoted $H_C(\Lambda)$. We also define coarse-grained cubic lattices $m\Lambda$ whose lattice constants are the integer $m$. For a given odd $m$, we posit the existence of a Clifford circuit $U$ satisfying
\begin{equation}
    \begin{split}
        UH_C(\Lambda)U^\dagger\sim H_C(m\Lambda)+H_{2D}(m\Lambda)
    \end{split}
    \label{eq:FFO}
\end{equation}
where $\sim$ denotes equality of ground spaces, and the Hamiltonian $H_{2D}$ describes four stacks of decoupled 2D toric codes normal to the $x$, $y$, $z$, and $w$ directions respectively, each with $\frac{m-1}{2}$ toric codes per lattice spacing. We construct such a circuit explicitly in the supplementary Mathematica file in the $m=3,5$ cases. In the case of general $m$, we show in Appendix~\ref{app:proof} the unitary $U$ exists, although we do not explicitly equate the model $H_{2D}(m\Lambda)$ to stacks of toric codes. In the following discussion, we explain the Chamon model's foliation structure (\ref{eq:FFO}) at the level of its fractional excitations.

In general, gapped long-range entangled phases are characterized by the structure of fractional excitations above the ground state. In FFOs, exfoliation of a set of 2D topological states corresponds to a factorization of the fusion group $A$ of quasiparticle superselection sectors into two subgroups $A'\boxtimes A_{2D}$. Here, we use $\boxtimes$ to denote a product of fusion groups such that there are no nontrivial mutual statistics between the two factors. $A'$ is the fusion group of the coarse-grained fracton order, and $A_{2D}$ is the fusion group of planons in the exfoliated topological layers.

\begin{figure*}[]
    \centering
    \includegraphics[width=.7\textwidth]{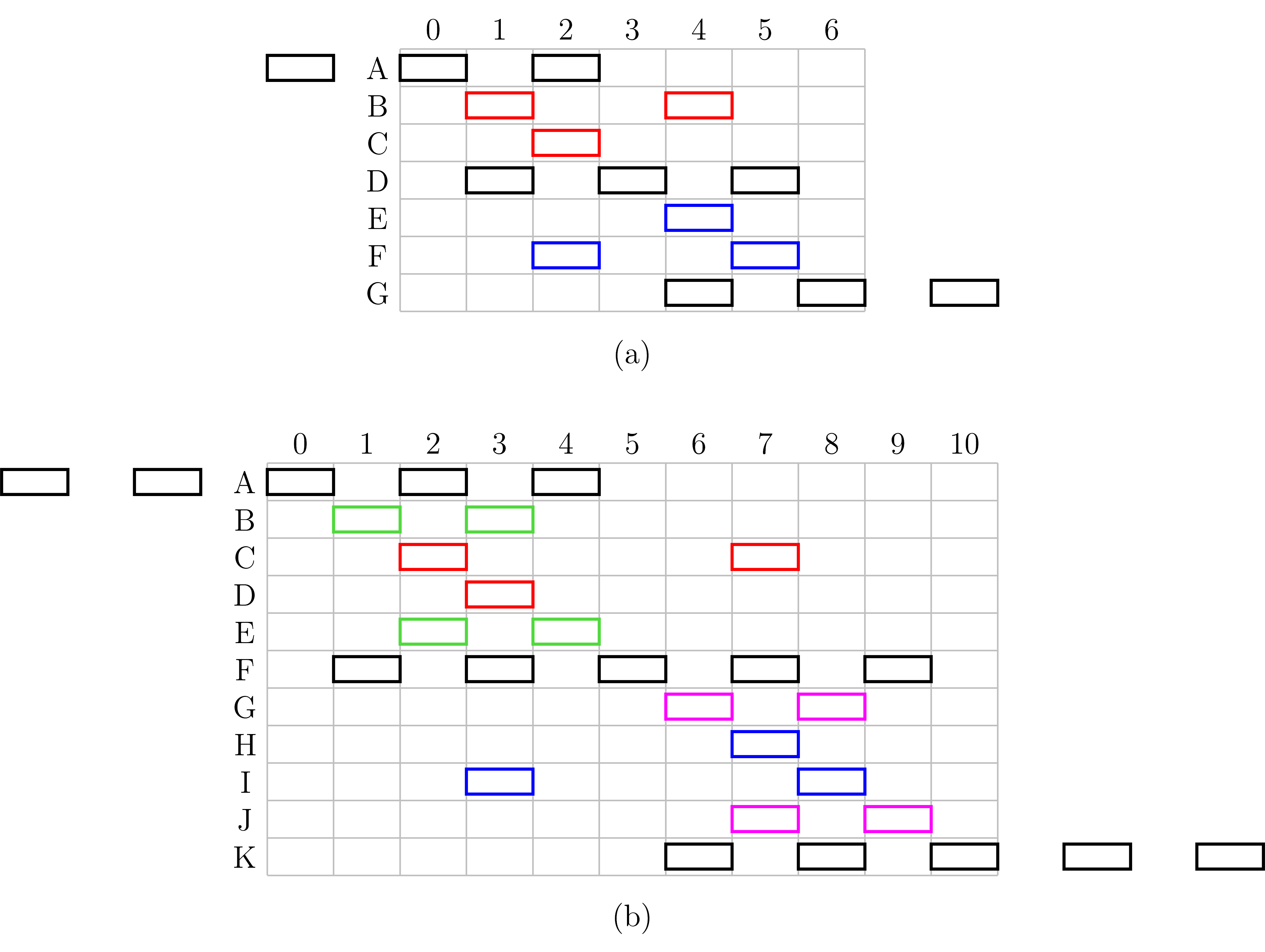}
    \caption{
      Planon diagrams depicting coarse-grained bases of $\mu$-normal planons in a given direction $\mu=x,y,z,w$ for the (a) $m=3$ and (b) $m=5$ cases. Each basis is translation-invariant with respect to the enlarged unit cell of $2m\Lambda$. Each vertical line is commensurate with a $\mu$-normal lattice plane, hence the numbers 0 to $2m$ represent dual lattice coordinates. A box lying in column $k$ represents a planon living in that dual lattice plane. On the other hand, each horizontal row represents a single generator of our chosen basis, equal to the fusion product of all elementary planons in the row. Since each unit cell contains $2m$ basis planons, A and G (K) belong to different unit cells for the $m=3$ ($m=5$) case. The planon bases are partitioned into $m$ subsets of two generators per unit cell, such that they have pairwise trivial mutual braiding statistics. (Recall that adjacent planons of the Chamon model have a mutual $\pi$ braiding statistic). For $m=3$, the subsets are colored black (ADG), red (BC), and blue (EF), whereas for $m=5$ they are colored  black (AFK), green (BE), red (CD), purple (GJ), and blue (HI). The black planons are excitations of the coarse-grained Chamon model $H_C(m\Lambda)$, as they are fermions (being composed of an odd number of fermions with trivial mutual statistics) with a mutual $\pi$ braiding statistic between adjacent pairs. On the other hand, each of the remaining $m-1$ pairs of planons generates a decoupled layer of 2D toric code. These diagrams verify the relation (\ref{eq:decomp}).}
    \label{fig:RG}
\end{figure*}

In the case of the \Chamon model, we find that the fusion group $A_C(\Lambda)$ on lattice $\Lambda$ obeys the following property:
\begin{equation}
    A_C(\Lambda)\cong A_C(m\Lambda)\boxtimes A_{2D}(m\Lambda)
    \label{eq:SDecomp}
\end{equation}
where
\begin{equation}
    A_{2D}=A^x_{2D}\boxtimes A^y_{2D}\boxtimes A^z_{2D}\boxtimes A^w_{2D}
\end{equation}
and $A^x_{2D}$, $A^y_{2D}$, $A^z_{2D}$, and $A^w_{2D}$ are the fusion groups of stacks of 2D toric codes in the $x$, $y$, $z$, and $w$ directions respectively, each with $\frac{m-1}{2}$ toric codes per lattice spacing. Here $\cong$ denotes a locality-preserving isomorphism.

To see this, note that by the transformation of the previous section, the fusion rules of $H_C(\Lambda)$ are identical to those of the 4-foliated X-cube model $H_{4XC}(2\Lambda)$ discussed in Appendix~\ref{app:4XC} (since $H_C$ is gLU equivalent to $H_G$ whose fusion rules are the same as $H_{4XC}$). The fusion group of $H_{4XC}$ is known to have the form $A_{4XC}=Q_{4XC}\times P_{4XC}$ where $P_{4XC}$ is the subgroup consisting of all planon excitations~\cite{shirley2018Fractional}, and $Q_{4XC}$ is a (non-unique) finite subgroup generated by one fracton and three lineons. As an aside, this observation forms the basis of the notion of \textit{quotient superselection sectors} (QSS), which are defined as equivalence classes of superselection sectors modulo planons~\cite{shirley2018Fractional}. According to this definition, the group of QSS of $H_{4XC}$ (and hence of $H_C$) is $A_{4XC}/P_{4XC}\cong Q_{4XC}$.

Hence, we have that $A_C=Q_C\times P_C$ where $Q_C$ is an order 16 subgroup and $P_C=P^x_C\boxtimes P^y_C\boxtimes P^z_C\boxtimes P^w_C$ is the subgroup of all planons. The decomposition of (\ref{eq:SDecomp}) is implied by the following decomposition of $P$:
\begin{equation}
    P_C(\Lambda)\cong P_C(m\Lambda)\boxtimes A_{2D}(m\Lambda),
    \label{eq:Pdec}
\end{equation}
since $Q_C$ can always be chosen such that $Q_C$ and $A_{2D}(m\Lambda)$ have no nontrivial mutual statistics, i.e.
\begin{equation}
    A_C(\Lambda)\cong \left[Q_C\times P_C(m\Lambda)\right]\boxtimes A_{2D}(m\Lambda).
\end{equation}
The equivalence (\ref{eq:Pdec}) can in turn be factored by direction:
\begin{equation}
    P_C^\mu(\Lambda)\cong P_C^\mu(m\Lambda)\boxtimes A_{2D}^\mu(m\Lambda).
\end{equation}
Thus, we can focus on the group of planons in a single direction, $P^\mu_C(\Lambda)$. Recall from Sec.~\ref{sec:Chamon} that for a given direction, there is one independent planon per lattice spacing whose loop operator is given by the product of $O_c$ terms in a particular dual lattice plane. The total group is generated by the set of all such elementary planons. Each elementary planon has fermionic exchange statistics. Moreover, neighboring planons have mutual semionic braiding statistics.

To demonstrate (\ref{eq:Pdec}), we need to find an alternative set of generating planons that splits into two parts: one that generates $P_C^\mu(m\Lambda)$ and one that generates $A_{2D}^\mu(m\Lambda)$. Actually, we will show the following equivalent relation:\footnote{The additional coarse-graining by a factor of two is necessary to pair up 3-fermion states so they can be transformed into pairs of toric codes.}
\begin{equation}
    P_C^\mu(\Lambda)\cong P_C^\mu(m\Lambda)\boxtimes A_{2D}^\mu(2m\Lambda)\boxtimes A_{2D}^\mu(2m\Lambda).
    \label{eq:decomp}
\end{equation}
Factorization of this form for $m=3$ and $m=5$ are depicted in the planon diagrams of Fig. \ref{fig:RG}, demonstrating that the fractional excitation structure of $H_C$ indeed exhibits the decomposition of (\ref{eq:SDecomp}). It is straightforward to generalize these diagrams for larger $m$. Thus, we conclude that the \Chamon model exhibits a 4-foliation structure of 2D toric code layers in the $x$, $y$, $z$, and $w$ directions.

\section{Discussion}
\label{sec:Discussion}
In this work, we have carried out a comprehensive investigation of the Chamon model, which is historically significant as the first fracton model to appear in the literature. Specifically, we have demonstrated two results: first, its characterization as a twisted 4-foliated gauge theory with emergent fermionic charge. Second, we have found that it has a 4-foliation structure composed of 2D toric code layers. The foliation structure is consistent with a conjecture of Ref.~\cite{Dua_RG_2019}, which outlines conditions under which a copy of 2D toric code can be extracted from a 3D stabilizer code model under a local unitary. The emergent gauge theory structure found in this paper, has been used by two of the authors to write a topological defect network for the \Chamon model~\cite{song2021topological}.

The transformation between the Chamon model and the 4-foliated X-cube variant $H_G$ is reminiscent of previous findings about the checkerboard model~\cite{shirley2018CB} and the Majorana checkerboard model~\cite{vijay_CB_2015}, which were respectively shown to be equivalent to two copies of the (3-foliated) X-cube model, and to the semionic X-cube model~\cite{Wang2019} (plus transparent fermions), each of which has a clear gauge theory description. It is similarly reminiscent of the equivalence between the Wen plaquette model~\cite{Wen2003} and the 2D toric code~\cite{kitaev2006anyons}. These transformations all have in common that the original model, \textit{e.g.} Chamon, has an enhanced translation symmetry compared with the transformed model, \textit{e.g.} $H_G$. Therefore, the respective gauge theory descriptions are enriched by translation symmetry via a nontrivial permutation on the fractonic superselection sectors. We leave a detailed exploration of this topic to future studies.

While it is known that CSS stabilizer codes can generically be characterized via emergent gauge theory, our results raise the question of how generally non-CSS codes in three dimensions admit such a description. It seems plausible that all stabilizer codes possess a gauge theory description and hence it could be enlightening to study more examples. For instance, one could check whether a gauge theory description, analogous to the Chamon model, is possible for the fracton models in Ref.~\cite{HHB}. Another question raised by this work is that of strong subsystem symmetry-protected topological (SPT) states in fermionic systems, whose classification is an open problem. We have argued that the Chamon model is dual to a weak subsystem SPT.

More generally, it is an open question to what extent the framework of emergent gauge theory has utility in the classification of fractonic phases of matter. To our knowledge, among the class of exactly solvable lattice models, there are no examples that are explicitly known to not admit a gauge theory description. It would be interesting to either find such an example, or demonstrate that none exist. On the other hand, there are examples of fractonic orders with excitations of infinite order which are unlikely to have any characterization in terms of finite gauge groups (although they arise naturally as infinite-component $U(1)$ Chern-Simons gauge theories~\cite{ChernSimons}).

Finally, it is worthwhile to note that the some of the fractonic excitations in the Chamon model exhibit non-bosonic self-exchange statistics~\cite{FractonSelfStatistics}. For the present analysis it has been sufficient to consider in detail the statistics of planon excitations. A systematic investigation of fracton self-statistics in $n$-foliated models is left to future work.



\section*{Acknowledgments}
W.S. is grateful to Xie Chen and Kevin Slagle for many inspiring discussions. A.D. thanks Dominic J. Williamson for related discussions. W.S. and A.D. are supported by the Simons Foundation through the collaboration on Ultra-Quantum Matter (651444, WS; 651438, AD) and by the Institute for Quantum Information and Matter, an NSF Physics Frontiers Center (PHY-1733907). W.S. also received support from the National Science Foundation under the award number DMR-1654340.

\bibliography{bib}
\bibliographystyle{apsrev4-2}
\appendix

\onecolumngrid




\section{Relation between $H_G$ and the 4-foliated X-cube model}
\label{app:4XC}
In this section we introduce the 4-foliated X-cube model $H_{4XC}$ and describe its relation to $H_G$. The fusion structure of excitations of $H_{4XC}$ is identical to that of $H_G$. However, the models differ in terms of the self and mutual statistics of the excitations. In this section we will use the $\mathbb{Z}_2[x,y,z,1/x,1/y,1/z]$ Laurent polynomial ring formalism for describing translation-invariant Pauli stabilizer codes~\cite{haah2013commuting}. In this formalism, Pauli operators in a cubic lattice system with $n$ qubits per site are represented by length $2n$ column vectors whose entries are elements of $\mathbb{Z}_2[x,y,z,1/x,1/y,1/z]$. The first $n$ entries represent the Pauli $X$ components, and the last $n$ entries the Pauli $Z$ components.

The Hilbert space of $H_{4XC}$ is the same as that of $H_G$. It is composed of one qubit on each edge of the tetrahedral-octahedral honeycomb. The Hamiltonian has the form
\begin{equation}
    H_{4XC}=-\sum_cA_c-\sum_{v,\mu}B_{v,\mu}
\end{equation}
where $c$ runs over all 3-cells of the honeycomb, $v$ all vertices, $\mu=x,y,z,w$, and
\begin{equation}
    A_c=\prod_{e\in c}X_e,\qquad B_{v,\mu}=\prod_{v\ni e\perp\mu}Z_e.
\end{equation}
This model is described by the polynomial matrix
\begin{equation}
    \Sigma=\begin{pmatrix}
    A&0\\0&B
    \end{pmatrix}
\end{equation}
where
\begin{equation}
    A=\begin{pmatrix}
        1 & yz & y+z\\
        1 & zx & z+x \\
        1 & xy & x+y \\
        1 & x & x+1 \\
        1 & y & y+1 \\
        1 & z & z+1
    \end{pmatrix},\qquad
    B=\begin{pmatrix}
        0 & 1+\frac{1}{x} & 1+\frac{1}{x} \\
        1+\frac{1}{y} & 0 & 1+\frac{1}{y} \\
        1+\frac{1}{z} & 1+\frac{1}{z} & 0 \\
        \frac{1}{y}+\frac{1}{z} & 0 & 0 \\
        0 & \frac{1}{x}+\frac{1}{z} & 0 \\
        0 & 0 & \frac{1}{x}+\frac{1}{y}
    \end{pmatrix}.
\end{equation}
The columns of $A$ represents the 3-cell terms $A_t$, $A_{t'}$ and $A_o$, whereas the columns of $A$ represent the vertex terms $B_{v,x}$, $B_{v,y}$, and $B_{v,z}$, which together generate $B_{v,w}$. Note that $\Sigma^\dagger\Omega_6\Sigma=0$ where $^\dagger$ represents transposition combined with spatial inversion, $\Omega_{k}=\begin{pmatrix}
0 & I_k \\
I_k & 0 
\end{pmatrix}$ is the $2k\times 2k$ symplectic form, and $I_k$ the $k\times k$ identity matrix. Thus, the terms of $H_{4XC}$ are mutually commuting.

\begin{figure*}[tbp]
    \centering
    \includegraphics[width=.9\textwidth]{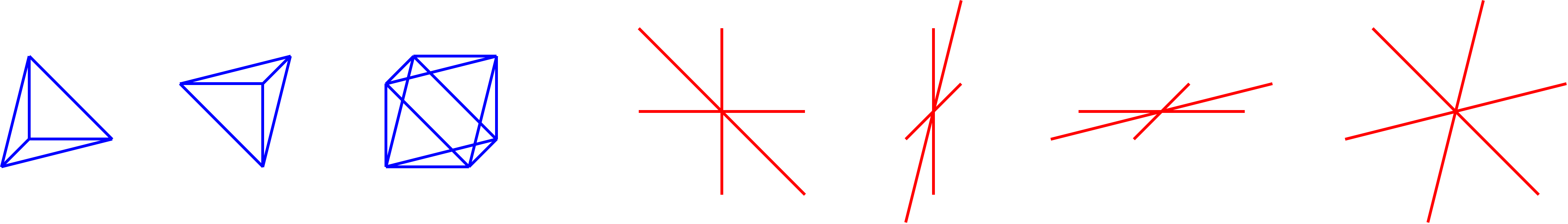}
    \caption{The terms $A_t$, $A_{t'}$, $A_o$, $B_{v,x}$, $B_{v,y}$, $B_{v,z}$ and $B_{v,w}$ of $H_{4XC}$, where $t$ and $t'$ are oppositely oriented tetrahedral cells and $o$ an octahedral cell. Here blue represents Pauli $X$ and red Pauli $Z$. Each term is a tensor product of the depicted Pauli operators.}
    \label{fig:4XC}
\end{figure*}

$H_G$ can be obtained from $H_{4XC}$ via a pair of locality-preserving, invertible but non-isomorphic transformations of the Pauli group $\mathcal{P}$:
\begin{equation}
    W\colon \mathcal{P}\to\mathcal{P},\qquad
    T\colon \mathcal{P}\to\mathcal{P}.
    \label{eq:algebra}
\end{equation}
In the polynomial formalism, these transformations correspond to multiplication by invertible but non-symplectic matrices:
\begin{equation}
    W=\begin{pmatrix}
    I_6&0\\ \widetilde{W}&I_6
    \end{pmatrix},\qquad
    T=\begin{pmatrix}
    I_6&\widetilde{T}\\0&I_6
    \end{pmatrix}
\end{equation}
where
\begin{equation}
    \widetilde{W}=\begin{pmatrix}
    1 & 1 & 0 & z & z & 0 \\
    0 & 0 & 0 & 0 & 0 & 0 \\
    1 & 1 & 0 & z & z & 0 \\
    \frac{1}{y} & \frac{1}{y} & 0 & \frac{z}{y} &     \frac{z}{y} & 0 \\
    0 & 0 & 0 & 0 & 0 & 0 \\
    \frac{1}{y} & \frac{1}{y} & 0 & \frac{z}{y} &     \frac{z}{y} & 0 
    \end{pmatrix},\qquad
    \widetilde{T}=\begin{pmatrix}
    0 & 1+\frac{y}{x} & 0 & 0 & z+1 & 0 \\
    0 & 0 & 0 & 0 & 0 & 0 \\
    1+\frac{x}{z} & 1+\frac{y}{z} & 0 & 0 & x+1 & 0 \\
    1+\frac{x}{yz} & 1+\frac{1}{z} & 1+\frac{1}{y} & 0 & 1+\frac{x}{y} & 0 \\
    0 & 1+\frac{y}{xz} & 0 & 0 & 0 & 0 \\
    1+\frac{1}{y} & 1+\frac{1}{x} & 1+\frac{z}{xy} & 1+\frac{z}{x} & 1+\frac{z}{y} & 0 
    \end{pmatrix}.
\end{equation}
Note that $T^2=W^2=1$. It holds that
\begin{equation}
    \overline{A}_c=T(W(A_c)),\qquad\overline{B}_{v,\mu}=T(W(B_{v,\mu})).
\end{equation}
Therefore, the Hamiltonian $H_G$ is represented by the polynomial matrix $\overline{\Sigma}=TW\Sigma$. Since $\overline{\Sigma}^\dagger\Omega_6\overline{\Sigma}=0$, the terms of $H_G$ mutually commute hence defining a stabilizer code. Note that
\begin{equation}
    TW=\begin{pmatrix}
    I_6+\tilde{T}\tilde{W}&\tilde{T}\\\tilde{W}&I_6
    \end{pmatrix}.
\end{equation}

The $T$ and $W$ transformations can also be used to define two other non-CSS stabilizer code Hamiltonians, represented by the polynomial matrices $W\Sigma$ and $T\Sigma$ satisfying $\Sigma^\dagger W^\dagger\Omega_6W\Sigma=0$ and $\Sigma^\dagger T^\dagger\Omega_6T\Sigma=0$. The Hamiltonians represented by $\Sigma$, $\overline{\Sigma}$, $W\Sigma$, and $T\Sigma$ can each be obtained via a gauging procedure of four stacks of planar $\mathbb{Z}_2$ symmetries. The procedure was described explicitly for $H_G$, represented by $\overline{\Sigma}$, in Sec.~\ref{sec:Gauging}. On the other hand, $\Sigma$, $W\Sigma$, and $T\Sigma$ can be obtained by gauging the following matter Hamiltonians, respectively:
\begin{align}
    H_M^{(1)}&=-\sum_tX_t-\sum_oX_o,\\
    H_M^{(2)}&=-\sum_tX_t-\sum_o\mathcal{X}_o,\\
    H_M^{(3)}&=-\sum_ti\gamma_t\gamma'_t-\sum_oX_o.
\end{align}
The Hilbert space of $H_M^{(3)}$ is the same as that of $H_M$, whereas those of $H_M^{(1)}$ and $H_M^{(2)}$ differ in that the fermionic orbital on each tetrahedral 3-cell is replaced by a qubit. The symmetries of $H_M^{(3)}$ are the same as those of $H_M$, whereas for $H_M^{(1)}$ and $H_M^{(2)}$ they are simply a product of Pauli $X$ operators over all 3-cells in a given dual lattice plane.

Therefore, each of the models $\Sigma$, $\overline{\Sigma}$, $W\Sigma$, and $T\Sigma$ represents a distinct kind of fractonic gauge theory. $\Sigma$ is coupled to a trivial bosonic paramagnet, $T\Sigma$ to a trivial atomic insulator/paramagnet state, $W\Sigma$ to a bosonic SSPT state, and $\overline{\Sigma}$ to a fermionic SSPT. The fusion rules of all four models are identical; moreover the generalized Aharanov-Bohm statistics between gauge charge and flux sectors have identical form. However, the models differ in terms of the statistics \textit{within} the charge and flux sectors. Acting on $\Sigma,$ the $W$ matrix represents a \textit{twist} of the gauge flux statistics, whereas the $T$ matrix represents a \textit{transmutation} of the gauge charge statistics. 
This can be seen from the equations
\begin{equation}
    W^\dagger\Omega_{6} W=\begin{pmatrix}
        \widetilde{W}+\widetilde{W}^\dagger&I_6\\I_6&0
    \end{pmatrix},\qquad
    T^\dagger\Omega_{6} T=\begin{pmatrix}
        0&I_6\\I_6&\widetilde{T}+\widetilde{T}^\dagger
    \end{pmatrix},\qquad
    W^\dagger T^\dagger\Omega_{6} TW=\begin{pmatrix}
        \widetilde{W}+\widetilde{W}^\dagger&I_6\\I_6&\widetilde{T}+\widetilde{T}^\dagger
    \end{pmatrix}.
    \label{eqn:TWprops}
\end{equation}
The off-diagonal components represent the Aharanov-Bohm interactions whereas the diagonal components represent the statistics within the charge and flux sectors. Therefore, $\Sigma$ and $W\Sigma$ have purely bosonic gauge charge statistics, whereas the tetrahedral fractonic charges of $\overline{\Sigma}$ and $T\Sigma$ are fermionic. On the other hand, $\Sigma$ and $W\Sigma$ have purely bosonic gauge flux lineons, whereas intersecting lineons of $\overline{\Sigma}$ and $W\Sigma$ have a mutual semionic braiding statistic.

\section{Polynomial representation of the transformation from $H_G$ to $H_C$}
\label{app:Polynomial}

In this appendix we express the transformation from the gauge theory Hamiltonian $H_G$ to the Chamon model $H_C$ in terms of the Laurent polynomial formalism. Regarding a $2\times2\times2$ cell as the unit cell with qubits labelled as in Fig.~\ref{fig:ChamonUnitCell}, $H_C$ is represented by the $16\times 8$ stabilizer map
\begin{equation}
    \widehat{\Sigma}=\begin{pmatrix}
    0 & x & y & 0 & 0 & xz & yz & 0 \\
    1 & 0 & 0 & y & z & 0 & 0 & yz \\
    1 & 0 & 0 & x & z & 0 & 0 & xz \\
    0 & 1 & 1 & 0 & 0 & z & z & 0 \\
    0 & x & y & 0 & 0 & x & y & 0 \\
    1 & 0 & 0 & y & 1 & 0 & 0 & y \\
    1 & 0 & 0 & x & 1 & 0 & 0 & x \\
    0 & 1 & 1 & 0 & 0 & 1 & 1 & 0 \\
    0 & 0 & y & xy & z & xz & 0 & 0 \\
    0 & 0 & y & y & z & z & 0 & 0 \\
    1 & x & 0 & 0 & 0 & 0 & z & xz \\
    1 & 1 & 0 & 0 & 0 & 0 & z & z \\
    1 & x & 0 & 0 & 0 & 0 & y & xy \\
    1 & 1 & 0 & 0 & 0 & 0 & y & y \\
    0 & 0 & 1 & x & 1 & x & 0 & 0 \\
    0 & 0 & 1 & 1 & 1 & 1 & 0 & 0 
    \end{pmatrix}.
\end{equation}
We define a matrix $C$ whose first (last) 8 columns represent the operators $\widehat{X}_{v,i}$ ($\widehat{Z}_{v,i}$) for $i=1,\ldots,8$:
\begin{equation}
    C=\begin{pmatrix}
    1 & 1 & 0 & z & z & 0 & 0 & 1 & 0 & 0 & 0 & 0 & 0 & 0 & 1 & 0 \\
    1 & 0 & 0 & 0 & 0 & 0 & 0 & 1 & 0 & 1+\frac{y}{x} & \frac{z}{x} & \frac{yz}{x} & z & y & 0 & 0 \\
    0 & 1 & 0 & 0 & 0 & 0 & 0 & 0 & 0 & 0 & \frac{z}{y} & z & \frac{xz}{y} & 0 & 0 & 1+\frac{x}{y} \\
    0 & 0 & 0 & 0 & 0 & 0 & 0 & \frac{1}{y} & 0 & 0 & 0 & 0 & 0 & 1 & \frac{1}{y} & 0 \\
    0 & 0 & 0 & 0 & 0 & 0 & 0 & 1 & 0 & 0 & 1 & 0 & 0 & 0 & 1 & 0 \\
    0 & 0 & 0 & 0 & 1 & 0 & 0 & 0 & 1 & 1 & 1 & 0 & 0 & 0 & 0 & 0 \\
    0 & 0 & 0 & 1 & 0 & 0 & 0 & \frac{1}{y} & 1 & 1 & 1 & 0 & 1+\frac{x}{y} & 1 & 0 & \frac{1}{z}+\frac{x}{yz} \\
    0 & 0 & 0 & 0 & 0 & 0 & 0 & 0 & 0 & 0 & \frac{1}{y} & 0 & 0 & 1 & \frac{1}{y} & 0 \\
    0 & 1 & 1 & y+z & z & y & 0 & 1 & 0 & 0 & 0 & 0 & 0 & 0 & 1 & 1 \\
    0 & 0 & 0 & 0 & 0 & 0 & 0 & 1 & 1 & 0 & 0 & 0 & 0 & 0 & 1 & 1 \\
    0 & 1 & 0 & 0 & 0 & 0 & 0 & 0 & 0 & 1 & \frac{z}{y} & z & \frac{xz}{y} & 0 & 0 & \frac{x}{y} \\
    0 & 0 & 0 & 0 & 0 & 1 & \frac{1}{y} & \frac{1}{y} & 1+\frac{1}{y} & 1+\frac{1}{x} & 1+\frac{z}{xy} & 1+\frac{z}{x} & 1+\frac{z}{y} & 1 & 0 & \frac{1}{y} \\
    0 & 0 & 1 & 0 & 0 & 0 & 1 & 1 & 0 & 1 & 1 & 0 & x & 0 & 0 & \frac{x}{z} \\
    0 & 0 & 0 & 0 & 1 & 0 & 0 & 0 & 1 & 1 & 1 & 0 & 1 & 0 & 0 & \frac{1}{z} \\
    0 & 0 & 0 & 0 & 0 & 0 & 0 & \frac{1}{y} & 0 & 0 & 0 & 1 & 0 & 0 & \frac{1}{y} & \frac{1}{z} \\
    0 & 0 & 0 & 0 & 0 & 0 & \frac{1}{y} & 0 & \frac{1}{y} & 0 & 0 & 1 & 0 & 0 & \frac{1}{y} & \frac{1}{z} 
    \end{pmatrix}.
\end{equation}
In this section, we will redefine the matrices $W$ and $T$ from the previous appendix such that they accommodate the two ancillary qubits. In particular,
\begin{equation}
    W=\begin{pmatrix}
    I_8&0\\ \widetilde{W}\oplus I_2&I_8
    \end{pmatrix},\qquad
    T=\begin{pmatrix}
    I_8&\widetilde{T}\oplus I_2\\0&I_8
    \end{pmatrix}
\end{equation}
Then, we define a matrix $V=CWT$ satisfying $V^\dagger\Omega_8 V=\Omega_8$ and $VTW=C$. Therefore $V$ is a Clifford QCA that maps $\overline{X}_{v,i}\to\widehat{X}_{v,i}$ and $\overline{Z}_{v,i}\to\widehat{Z}_{v,i}$. Moreover,
\begin{equation}
    V\overline{\Sigma}=\widehat{\Sigma}V_2
\end{equation}
where $V_2$ is the invertible matrix
\begin{equation}
    V_2=\begin{pmatrix}
    1 & 0 & 0 & \frac{1}{y} & 0 & \frac{1}{y} & 0 & 0 \\
    0 & 0 & 1 & 0 & 0 & \frac{1}{x} & 0 & 0 \\
    0 & 0 & 1 & \frac{1}{y} & \frac{1}{y} & \frac{1}{y} & \frac{1}{y} & 0 \\
    0 & 0 & 1 & 0 & \frac{1}{xy} & \frac{1}{xy} & 0 & 0 \\
    0 & 0 & 1 & \frac{1}{z} & 0 & 0 & 0 & \frac{1}{z} \\
    0 & 0 & 1 & 0 & 0 & 0 & 0 & 0 \\
    0 & 0 & 1 & \frac{1}{yz} & \frac{1}{yz} & 0 & 0 & 0 \\
    0 & 1 & 0 & 0 & \frac{1}{yz} & 0 & 0 & \frac{1}{yz} 
\end{pmatrix}.
\end{equation}
Therefore, $V$ maps the ground space of $H_G$ to that of $H_C$. In the supplementary Mathematica file, we demonstrate that $V$ is actually a finite-depth Clifford circuit (\textit{i.e.}, it can be decomposed into a product of elementary symplectic transformations). This demonstrates that $H_G$ and $H_C$ are gLU equivalent.

\section{Entanglement renormalization of the \Chamon model}
\label{app:proof}
In this section, we study the entanglement renormalization (ER) on the \Chamon model using the polynomial formalism~\cite{haah2013commuting,haah2014bifurcation}. The stabilizer map $\sigma$ and the excitation map $\epsilon$ for the \Chamon model can be written as 
\begin{eqnarray}
  \sigma=  \left(\begin{array}{cccc}
         (1+x^{-1})(y^{-1}+z^{-1})\\
         (1+y^{-1})(x^{-1}+z^{-1})
    \end{array}
    \right)
    \label{Chamon_code}
\end{eqnarray}
and
\begin{eqnarray}
  \epsilon= \sigma^\dagger \Omega_1= \left(\begin{array}{cccc}
         (1+y)(x+z)& (1+x)(y+z)
    \end{array}
    \right) \, .
    \label{Chamon_exc_map}
\end{eqnarray}
respectively~\cite{haah2013commuting}. Our approach to doing ER involves going to a basis of stabilizer terms such that the associated basis excitations include the bosonic planon charges. Then we write the creation operators or movers of these bosonic charges and apply translation invariant gates (up to coarse-graining) to reduce them into a canonical form of unit vectors. The excitations that form the bosonic planons and the relative positions between them are shown in Fig.~\ref{fig:RG}.
Before stating an explicit ER result for the \Chamon model, we first prove that a coarse-grained copy of itself can be extracted under ER of the \Chamon model. 
In particular, we have the following theorem.

\begin{theorem}
For any odd $m$, there exists a Clifford circuit $U$ such that
\begin{equation}
    UH_C(\Lambda)U^\dagger\sim H_C(m\Lambda)+H_B(m\Lambda),
\end{equation}
for some Pauli Hamiltonian $H_B$. Here $\sim$ denotes equality of ground spaces.
\label{thm:ERG_Chamon}
\end{theorem}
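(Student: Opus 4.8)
The plan is to work entirely in the Laurent polynomial formalism and exhibit the circuit $U$ by composing coarse-graining with an explicit sequence of translation-invariant Clifford gates. First I would coarse-grain the lattice $\Lambda$ by the factor $m$, so that the stabilizer map $\sigma$ of (\ref{Chamon_code}) becomes an $m^3$-block matrix over $\mathbb{Z}_2[x^{\pm1},y^{\pm1},z^{\pm1}]$ acting on $m^3$ qubits per enlarged unit cell. The key structural input is the discussion surrounding Fig.~\ref{fig:RG}: for each direction $\mu = x,y,z,w$, the group $P^\mu_C(\Lambda)$ of elementary planons — each a fermion, with adjacent planons braiding by $-1$ — reorganizes, after coarse-graining by $m$, into a ``black'' sub-basis of $2$ planons per unit cell (a triple of mutually transparent fermions, hence the excitation content of a coarse-grained Chamon planon stack) together with $m-1$ decoupled pairs, each pair having the braiding of a single 2D toric code. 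I would promote this excitation-level statement to an operator-level one: choose a new symplectic basis of stabilizer generators whose associated basis excitations are exactly the bosonic planon charges appearing in the black/colored decomposition, and whose creation/mover operators I write down explicitly as polynomial vectors.

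The heart of the argument is then a normalization step. Having the movers of the colored (toric-code) planons in hand, I would apply translation-invariant symplectic transformations — row/column operations over the polynomial ring realizing elementary Clifford gates, possibly after one further coarse-graining by $2$ as flagged in the footnote to (\ref{eq:decomp}) — to bring each colored pair of mover vectors into the canonical unit-vector form of a standalone 2D toric code on $m\Lambda$ (this is exactly the ``reduce to a canonical form of unit vectors'' procedure described just before the theorem, and is the same mechanism used in Refs.~\cite{haah2014bifurcation,Dua_RG_2019}). The composite of all these gates is the circuit $U$; since each factor is translation invariant up to coarse-graining and symplectic, $U$ is a finite-depth Clifford circuit and $U\sigma$ splits as a direct sum $\sigma_{C}(m\Lambda)\oplus\sigma_B(m\Lambda)$, where $\sigma_{C}(m\Lambda)$ is the coarse-grained Chamon map (its excitations being the black planons plus the coarse-grained lineon/fracton sectors $Q_C$, which by the remark after (\ref{eq:Pdec}) can be chosen transparent to the colored sector) and $\sigma_B$ is whatever Pauli Hamiltonian the leftover stabilizers define. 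This yields $UH_C(\Lambda)U^\dagger \sim H_C(m\Lambda)+H_B(m\Lambda)$ as claimed; note the theorem only asserts existence of $H_B$, so I need not identify $\sigma_B$ with a stack of toric codes — that finer statement is verified separately for $m=3,5$ in the supplementary file.

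The main obstacle is the normalization step for \emph{general} odd $m$: showing that the mover vectors of the $m-1$ colored planon pairs can simultaneously be reduced to decoupled unit vectors by \emph{translation-invariant} gates (as opposed to gates whose support grows with $m$), while keeping the black sector's polynomials intact and transparent. Equivalently, one must argue that the relevant polynomial matrix admits a Smith-like normal form over $\mathbb{Z}_2[x^{\pm1},\dots]$ that respects the block structure; the parity of $m$ and the extra factor-of-two coarse-graining are precisely what make the 3-fermion states pair up correctly, so the induction/pattern in Fig.~\ref{fig:RG} must be shown to persist. I would handle this by making the Fig.~\ref{fig:RG} construction explicit as an $m$-indexed family — defining the colored generators by a uniform formula in the dual-lattice coordinate and checking the braiding and the gate reduction once, symbolically in $m$ — rather than case by case. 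The remaining verifications (that $U$ is symplectic, that the black sector reproduces $\sigma_C(m\Lambda)$, that nothing couples the sectors) are then routine polynomial bookkeeping.
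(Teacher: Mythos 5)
Your plan is a genuinely different route from the paper's proof, but as written it has a gap at exactly the step you flag as "the main obstacle," and that gap is not incidental --- it is the entire content of the theorem. You propose to normalize the movers of all $m-1$ colored planon pairs per direction (Fig.~\ref{fig:RG}) into standalone toric-code form by translation-invariant gates and then declare that the leftover block is $\sigma_C(m\Lambda)$. Two things are missing. First, the simultaneous translation-invariant reduction of the whole colored sector for general odd $m$ is only sketched as an "$m$-indexed family" to be checked symbolically; no uniform formula or inductive argument is supplied, and this is precisely the part the paper does \emph{not} attempt for general $m$ (it verifies the toric-code identification only for $m=3,5$ in the supplementary file). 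Second, and more importantly for Theorem~\ref{thm:ERG_Chamon} as stated, promoting the excitation-level decomposition of Fig.~\ref{fig:RG} to the operator-level statement that the complementary block of the stabilizer map is \emph{exactly} the coarse-grained Chamon map is not "routine polynomial bookkeeping"; it is the substantive step, and your proposal inverts the logic: you identify the factor the theorem leaves unspecified ($H_B$) and leave unproven the factor the theorem actually asserts ($H_C(m\Lambda)$).

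The paper's proof avoids both difficulties by normalizing only \emph{two} operators. It takes the fracton (planon-membrane) creation operators $s_1,s_2$ of Eq.~(\ref{Chamon_exc_map})'s code, observes that after coarse-graining by $m$ they are supported on a single coarse cell, proves that they anticommute there precisely when $m$ is odd (Lemma~\ref{lem:s1s2com}, where $s_1^{\dagger}\Omega_1 s_2$ has constant coefficient $m\bmod 2$ --- this is where oddness enters, rather than through fermion pairing), and brings them to the canonical pair $X_1,Z_1$ by single-cell CNOTs (Lemma~\ref{lem:s1s2canonical}). Their syndromes are then the Chamon excitation polynomials $(1+y)(x+z)$ and $(1+x)(y+z)$ on the first stabilizer row, and the decoupling of everything else is forced by the topological order condition $\ker\epsilon=\im\sigma$ applied to the excitation map (Eqs.~(\ref{eq:eps1})--(\ref{eq:eps3})): one inserts the Chamon row and clears the first column by row operations. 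The Chamon factor is thus constructed directly and the residual is simply named $H_B$. If you want to salvage your route, you would either need to supply the uniform-in-$m$ normalization of the colored sector plus an explicit identification of the black-sector stabilizers with $O_c$ on $m\Lambda$, or, more economically, redirect your normalization to the two Chamon-generating movers and borrow the exactness argument --- at which point you have reproduced the paper's proof.
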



\begin{proof}
We first write down two fracton creation operators,
$$s_1=x^{m-1}(1+y+...+y^{m-1})(1+z/x+...+(z/x)^{m-1})(1,0)^T$$
and
$$s_2=y^{m-1}(1+x+...+x^{m-1})(1+z/y+...+(z/y)^{m-1})(0,1)^T,$$
which create fracton excitations at the sites corresponding to the polynomials
$(1+y^m)(x^m+z^m)$ and $(1+x^m)(y^m+z^m)$ respectively. Note that $s_1$ and $s_2$ are related via permutation of $x$ and $y$. In other words, the action of the excitation map as defined in Eq.~\ref{Chamon_exc_map} on operators $s_1$ and $s_2$ is given by
\begin{align*}
 \epsilon s_1=(1+y^m)(x^m+z^m)\\
 \epsilon s_2=(1+x^m)(y^m+z^m). 
\end{align*}
Under coarse-graining of the lattice, the translation group is reduced such that the translation variables modify to $x^\prime=x^m,y^\prime=y^m,z^\prime=z^m$. On the coarse-grained lattice, the representation of the creation operators $s_1$ and $s_2$ is given by $s_1^{(m)}$ and $s_2^{(m)}$ respectively. Namely, $\phi^m_{\#}(s_1)=s_1^{(m)}$ and $\phi^m_{\#}(s_2)=s_2^{(m)}$ where $\phi^m_{\#}$ is the map that implements coarse-graining by a factor of $m$.   
We now state two lemmas about $s_1^{(m)}$ and $s_2^{(m)}$, one about the commutation relation and the other about reducing them to a canonical form via elementary symplectic transformations. The proofs are these lemmas are given after this proof.

\begin{lemma}
\label{lem:s1s2com}
For odd $m$, $s_1^{(m)\dagger}\Omega_ms_2^{(m)}=1$ where $\Omega_m=\begin{pmatrix}
0 & \mathbbm{1} \\
\mathbbm{1} & 0 
\end{pmatrix}$ is a $2m\times 2m$ symplectic form and $\mathbbm{1}$ is an $m\times m$ Identity matrix. 
\end{lemma}

\begin{lemma}
\label{lem:s1s2canonical}
For odd $m$, the creation operators $s_1^{(m)}$ and $s_2^{(m)}$ can be mapped to
\begin{align}
s_1^{(m)} &= \begin{pmatrix} 1 & 0 \cdots & 0 &| & 0 &\cdots & 0\end{pmatrix}^T, \nonumber\\
s_2^{(m)} &= \begin{pmatrix} 0 & 0\cdots &0 &| & 1 & \cdots & 0 \end{pmatrix}^T
\end{align}
via translation invariant elementary symplectic transformations. Here, as shown, $s_1^{(m)}$ and $s_2^{(m)}$, respectively, have only one nonzero entry at the 1st and ($m^3$+1)-th vector components. 
\end{lemma}


The excitation represented as a singleton element, $(1)$ before coarse-graining, is represented by the unit vector $e_1=(1,0,0,...,0)^T$ with $m$ entries after coarse graining. Considering the action of $\epsilon$ on the creation operators $s_1^{(m)}$, $s_2^{(m)}$ yields $\epsilon s_1^{(m)}=(1+y')(x'+z')e_1$ and $\epsilon s_2^{(m)}=(1+x')(y'+z')e_1$, the excitation map becomes
\begin{align}
\epsilon = \begin{pmatrix}[ccccc|ccccc]
(1+y)(x+z) & \star & \star & \cdots & \star & (1+x)(y+z) &  \star & \star & \cdots & \star \\
0     &\star & \star & \cdots & \star  & 0   &  \star & \star & \cdots & \star \\
\vdots&\vdots& \vdots & \vdots & \vdots & \vdots \\
0    & \star & \star & \cdots & \star  & 0   &  \star & \star & \cdots & \star
\end{pmatrix} \label{eq:eps1}
\end{align}
where we suppressed the $\prime$ in the coarse-grained translation variables and where $\star$ indicates unknown entries.
Since $$\begin{pmatrix}
(1+x^{-1})(y^{-1}+z^{-1}) &0& 0 & \cdots & 0&(1+y^{-1})(x^{-1}+z^{-1}) &0& 0 & \cdots & 0
\end{pmatrix}^T\in \ker \epsilon,$$ the topological order condition $\ker \epsilon = \im \sigma=\im \Omega_q \epsilon^\dagger$ implies that
the rows of $\epsilon$ must generate
$$\begin{pmatrix}
(1+y)(x+z) &0& 0 & \cdots & 0&(1+x)(y+z) &0& 0 & \cdots & 0
\end{pmatrix}^T.$$
This implies that we can insert this as a row in the excitation map as follows, 
\begin{align}
\epsilon = \begin{pmatrix}[ccccc|ccccc]
(1+y)(x+z) & 0 & 0 & \cdots & 0 & (1+x)(y+z) &  0 & 0 & \cdots & 0 \\
(1+y)(x+z) & \star & \star & \cdots & \star & (1+x)(y+z) &  \star & \star & \cdots & \star \\
0     &\star & \star & \cdots & \star  & 0   &  \star & \star & \cdots & \star \\
\vdots&\vdots& \vdots & \vdots & \vdots & \vdots \\
0    & \star & \star & \cdots & \star  & 0   &  \star & \star & \cdots & \star 
\end{pmatrix}. \label{eq:eps2}
\end{align}
On applying appropriate row operations, we get
\begin{align}
\epsilon = \begin{pmatrix}[ccccc|ccccc]
(1+y)(x+z) & 0 & 0 & \cdots & 0 & (1+x)(y+z) &  0 & 0 & \cdots & 0 \\
0 & \star & \star & \cdots & \star & 0 &  \star & \star & \cdots & \star \\
0     &\star & \star & \cdots & \star  & 0   &  \star & \star & \cdots & \star \\
\vdots&\vdots& \vdots & \vdots & \vdots & \vdots \\
0    & \star & \star & \cdots & \star  & 0   &  \star & \star & \cdots & \star 
\end{pmatrix}. \label{eq:eps3}
\end{align}
Thus, we have extracted a copy of the \Chamon model.
\end{proof}

We now give proofs of the two lemmas that were used in proving Theorem \ref{thm:ERG_Chamon}.

\begin{proof}[Proof of Lemma \ref{lem:s1s2com}]

The polynomial given by
$s_1^\dagger \Omega_1 s_2$ encodes the commutation of translates of $s_1$ and $s_2$. Here,  $\Omega_m=\begin{pmatrix}
0 & \mathbbm{1} \\
\mathbbm{1} & 0 
\end{pmatrix}$ is an $2m\times 2m$ symplectic form and $\mathbbm{1}$ is an $m\times m$ Identity matrix.
Let us denote the coefficient of $g$ in the polynomial $s_1^\dagger \Omega_1 s_2$ as $(s_1^\dagger \Omega_1 s_2)_g$. We note that two Pauli operators $a$ and $b$ commute if $(a^\dagger \Omega_q b)_1=0$. 

Note that $s_1$ and $s_2$ can be expressed as follows,
$$s_1=(1+y+...+y^{m-1})(x^{m-1}+zx^{m-2}+...+z^{m-1})(1,0)^T$$
and
$$s_2=(1+x+...+x^{m-1})(y^{m-1}+zy^{m-2}+...+z^{m-1})(0,1)^T.$$ Since all powers of translation variables are less than $m$, under coarse-graining by a factor of $m$ in each direction, we are left with $2m$-dimensional vectors for $s_1$ and $s_2$ with only $1$s and $0$s. For $s_1$, the $1$s appear in the first half and for $s_2$, in the second half. Due to this form, $s_1^{(m)\dagger}\Omega_ms_2^{(m)}=(s_1^{(m)\dagger}\Omega_ms_2^{(m)})_1$ i.e. only the coefficient of 1 contributes and there are no monomials. Since the commutation relation between the operators $s_1$ and $s_2$ i.e. $ (s_1^{\dagger}\Omega_1 s_2)_1$ is not affected by coarse-graining, we get 
\beq
&&s_1^{(m)\dagger}\Omega_ms_2^{(m)}\nonumber\\
&=&(s_1^{(m)\dagger}\Omega_ms_2^{(m)})_1 \nonumber\\
&=& (s_1^{\dagger}\Omega_1 s_2)_1  \nonumber\\
&=& m \mod 2 
\eeq 
Thus, $s_1^{(m)\dagger}\Omega_ms_2^{(m)}=1$ when $m$ is odd.

\end{proof}

\begin{proof}[Proof of Lemma \ref{lem:s1s2canonical}]

For both $s_1$ and $s_2$, the degrees of translation variables $x$, $y$ and $z$ range from 0 to $m-1$.  Thus, after coarse-graining, $s_1^{(m)}$ and $s_2^{(m)}$ are both supported on at only one unit cell (at location 1). In particular, $s_1^{(m)}$ is a Laurent polynomial vector over $\FF_2[1]$,
satisfying $s_1^{(m)\dagger} \Omega_m s_1^{(m)} = 0$.
Since $\FF_2[1]$ is a principal ideal domain,
we can find an elementary symplectic transformation $E_1$
composed of CNOT gates that turns~$s_1$ into a vector with a single nonzero component, say, $g$ at the first entry. Since the only nonzero component in $\FF_2[1]$ is 1, $g=1$.

Since the transformation $E_1$ acts only at the origin, $E_1s_2^{(m)}$ still acts only at location 1 and thus is a Laurent polynomial vector over $\FF_2[1]$.  Since $s_1^{(m)\dagger}\Omega_m s_2^{(m)}$= 1, the $(m^3+1)$-th component of $E_1s_2^{(m)}$ must be 1. Since $E_1s_2^{(m)}$ can have non-zero entries i.e. $1$s only in the second half of the vector, they can all be cancelled out via CNOT gates without affecting the form of $s_1^{(m)}$. Thus, the we get the form of $s_1^{(m)}$ and $s_2^{(m)}$ as desired.
\end{proof}

\subsection{Explicit ER circuit}
In the supplementary Mathematica file, we have constructed a circuit $U$ which carries out an explicit ER of the \Chamon model given as follows: 
\begin{align}
        &UH_C(\Lambda)U^\dagger\sim H_C(3\Lambda)+H_{2D}(6\Lambda)+H_{2D}(6\Lambda)\nonumber,\\
        &H_{2D}=H^{toric}_x+H^{toric}_y+H^{toric}_z+H_w^{toric}. 
\end{align}
Here, $H^{toric}_\mu$ is a stack of 2D toric codes along the $\mu$ direction with one layer per lattice spacing. 
\end{document}